\newtheorem{thm}{Theorem}
\newtheorem{prop}[thm]{Proposition}
\newtheorem{cor}[thm]{Corollary}
\newtheorem{fact}[thm]{Fact}
\theoremstyle{definition}
\newtheorem{defn}[thm]{Definition}
\theoremstyle{plain}
\begin{document}

\title{Convergence Speed of the Consensus Algorithm with Interference and Sparse Long-Range Connectivity}

\author{\begin{table}[hbt]
		\begin{center}
			\begin{tabular}{lll}
				\small S.~Vanka$^{*}$ & \small M.~Haenggi & \small V.~Gupta \\
				\small Department of Electrical Engineering & \small Department of Electrical Engineering & \small Department of Electrical Engineering \\
				\small 218 Cushing Hall & \small 274 Fitzpatrick Hall & \small 270 Fitzpatrick Hall \\
				\small University of Notre Dame & \small University of Notre Dame & \small University of Notre Dame \\
				\small Notre Dame, IN 46556, USA & \small Notre Dame, IN 46556, USA & \small Notre Dame, IN 46556, USA \\ 
				\small E-mail:svanka@nd.edu & \small E-mail:mhaenggi@nd.edu & \small E-mail:vgupta2@nd.edu \\
				\small Phone:(574)-631-8245 & \small (574)-631-6103 & \small (574)-631-2294 
			\end{tabular}
		\end{center}
        \end{table}
\thanks{{*}Corresponding author. The work of the first two authors was partially supported by NSF (grants CNS 04-47869 and CCF 728763). The work of the third author was supported partially by the NSF award 0846631.}%
}
\maketitle
\IEEEpeerreviewmaketitle
\vspace{-3cm}
\begin{abstract}
We analyze the effect of interference on the convergence rate of average consensus algorithms, which iteratively compute the measurement average by message passing among nodes. It is usually assumed that these algorithms converge faster with a greater exchange of information (i.e., by increased network connectivity) in every iteration. However, when interference is taken into account, it is no longer clear if the rate of convergence increases with network connectivity. We study this problem for randomly-placed consensus-seeking nodes connected through an interference-limited network. We investigate the following questions: (a) How does the rate of convergence vary with increasing communication range of each node? and (b) How does this result change when each node is allowed to communicate with a few selected far-off nodes? When nodes schedule their transmissions to avoid interference, we show that the convergence speed scales with $r^{2-d}$, where $r$ is the communication range and $d$ is the number of dimensions. This scaling is the result of two competing effects when increasing $r$: Increased schedule length for interference-free transmission vs.~the speed gain due to improved connectivity. Hence, although one-dimensional networks can converge faster from a greater communication range despite increased interference, the two effects exactly offset one another in two-dimensions. In higher dimensions, increasing the communication range can actually degrade the rate of convergence. Our results thus underline the importance of factoring in the effect of interference in the design of distributed estimation algorithms.\\
 \emph{Keywords}--Average Consensus, Wireless Networks, Scaling Laws, MAC Protocols.
\end{abstract}

\section{Introduction}
\subsection{Motivation}
The advent of wireless sensor and ad hoc networks has motivated the need for distributed information processing algorithms, which allow each node to operate only on local information. A well-studied algorithm that allows distributed averaging is the average consensus algorithm, wherein the global average of a set of initial sensor observations can be computed based on purely local computations at each sensor. Starting from a set of initial measurements, the average consensus algorithm allows a set of nodes to communicate by a (possibly time-varying) topology to iteratively compute the global average of the initial measurements, see e.g.,~\cite{XiaoBoyd03, Boyd06, Moallemi06, Blondel05, XiaoIPSN05, KarMoura08, Dimakis06, LiDaiZhangLADAU_IT10, AysalDisturb10} and references therein. The connectivity properties of the topologies that ensure convergence have been well-studied (e.g.,~\cite{Olfati-Saber04, RenBeard05}). Of late, the focus has shifted to studying convergence in the face of communication constraints, like quantization~\cite{Nedic09, Aysal08, KashyapQuantization07}, packet drops~\cite{Fagnani09} and noise~\cite{RajWainwright08}. A closely associated algorithm is the gossip algorithm \cite{AysalScaglioneBroadcastTSP09, Boyd06, AysalWireless2010 }. In particular, the recent work~\cite{AysalWireless2010} proposes and studies a probabilistic version of the broadcast gossip algorithm~\cite{AysalScaglioneBroadcastTSP09}. The idea is to exploit channel fluctuations to enable opportunistic longer-range message-passing. Since only one node is allowed to transmit at any given time, the question of interference does not arise.

In this paper, unlike prior work, we study the effect of interference, which becomes important in the formation of more general message-passing topologies. We explicitly model the effect of interference on the rate of topology formation---and hence convergence---of the average consensus algorithm. This important effect---which crucially depends on network geometry---has been largely ignored. In wireless networks, depending on the physical proximity of $a$ to $d$ and $c$ to $b$, the transmission from $a$ to $b$ and $c$ to $d$ may interfere with one another; hence two time slots may be needed to establish edges $\overrightarrow{(a,b)}$ and $\overrightarrow{(c,d)}$. The network thus has two time-scales of interest: that of establishing individual communications among the desired set of nodes and that of the iterations of the distributed algorithms, which occur only when all the desired nodes have successfully communicated. One may thus, view the underlying communication network as constructing the desired message passing graphs from several feasible sub-graphs, each of which satisfies half-duplex, fading and interference constraints. The union of all these sub-graphs is the desired message passing graph. 

To illustrate this, consider the formation of a simple linear 6-node network shown in Fig.~\ref{fig:SixNodeExample}. Suppose the estimation algorithm requires nearest-neighbor communication (shown as bidirectional edges). However, due to interference constraints, only every third node can transmit. In this case, we see that forming the the desired topology requires at least three time-slots, as shown. In other words, for these interference constraints, this topology's \emph{fastest rate of formation} is three time slots. Clearly, a topology's intrinsic benefit \emph{and} the fastest rate of its formation determine its true utility. 

\begin{figure}[hbt]
\centering
	\includegraphics[scale=0.35]{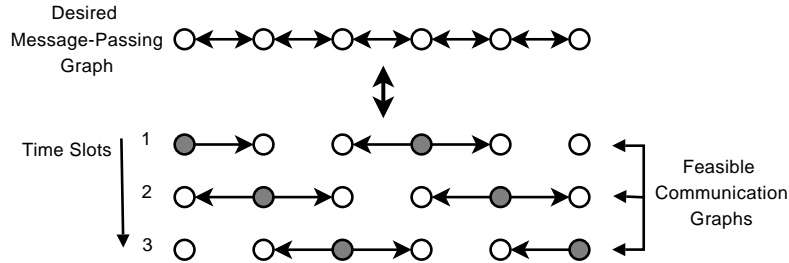}
	\caption{\label{fig:SixNodeExample}An example illustrating the constraints introduced by interference. If the nodes are physically placed as shown, interference limits the number of nodes that can communicate concurrently. Assuming a spatial re-use factor of two, the message-passing graph can be formed as a union of three feasible sub-graphs, each of them satisfying interfefence constraints. We consider this TDMA schedule \emph{feasible}.}
\end{figure}

The performance of the underlying (real-time) estimation algorithm is therefore \emph{coupled} with algorithms for channel access and routing. In our previous work~\cite{VGH08}, we studied the coupling with channel access for the average consensus algorithm for a certain class of deterministic network topologies. Using a simple protocol model~\cite{GK2000} for reception, we were able to show that the effect of increasing network connectivity depends crucially on its dimension. In our recent work~\cite{Vanka09globecom} we exploited the well-known parallels between the convergence of the average consensus algorithm and Markov chain mixing (e.g.,~\cite{Boyd06} and the references therein) to study consensus on disk graphs~\cite{Gilbert61} using the more refined physical model. We examined the scaling behavior of the fastest rate of topology formation with interference, captured by the shortest feasible TDMA schedules that construct the graph.

We note here that implementing inter-node communication in a network will require some additional overhead. For example, one possible protocol that establishes point-to-point communication can have nodes tag their packets with their uniquely assigned address. A receiver reads this address and decodes a packet only if the address is that of one of its intended transmitters. In this work, we neglect this additional overhead. However, we show that even when this overhead is neglected, increased interference alone is enough to significantly lower the rate of topology formation.

\subsection{Main Contributions}

In this paper, we study networks with short-range and networks with both short-range and limited long-range communication. Although remarkable improvements in convergence rate have been reported~\cite{SalehiSmallWorld07, AldosariMouraSmallWorld05, Olfati-SaberUltrafast} for consensus on graphs with a few long-range edges (as in small-world graphs~\cite{NewmanWatts00}),  it is not clear if these benefits will carry over to a wireless setting, where long-range links come at a cost of increased interference. Motivated by this fact, we study the average consensus problem in graphs formed by overlaying long-range edges onto an existing ``short-range'' disk graph. We derive the scaling law for the spectral gap as well as that of the fastest rate of topology formation in the presence of interference. To the best of our knowledge, this is the first such attempt.

We find that the spectral gap scales quadratically in the communication range $r$, independently of the network dimension $d$, but the length of the shortest TDMA schedule that constructs such graphs scales as $r^d$. Thus when interference is factored in, the benefit of a greater communication range depends crucially on the network dimension:
\begin{itemize}
\item For one-dimensional networks ($d=1$), topologies with increased communication range can converge faster despite greater interference. 
\item For two-dimensional networks, the rate of convergence scales \emph{independently} of the communication range. 
\item For three- (and higher-) dimensional networks, increasing the communication range can actually slow down convergence. 
\end{itemize}
Furthermore, these results hold whether each node only communicates with all other nodes within its communication range, or, additionally, with a small number of far-away nodes. Thus our results significantly change many optimistic results obtained by analyzing the consensus problem in an abstract graph-theoretic setting.

The remainder of this paper is organized as follows. In Section~\ref{sec:Definitions}, we provide some standard definitions and results used in this paper. In Section~\ref{sec:Problem-Formulation}, we specify our system model and formulate the problem using the terminology developed in Section~\ref{sec:Definitions}. In Section~\ref{sec:Cvg-Flat-Networks}, we discuss convergence results for the disk graph model. In Section~\ref{sec:Cvg-Tiered-Networks}. we study the effect of selective long-range communication and provide the relevant scaling results. Section~\ref{sec:Conclusions} concludes the paper.

\section{\label{sec:Definitions}Definitions and Notation}

To make this paper self-contained, we formally state the following standard definitions and facts about Markov chains and introduce some notation and 
other relevant terminology. 

\subsubsection{Basic Definitions from Markov Chain Theory}

Consider a connected undirected graph $G$, with $n$
vertices $V=\{1,2,\ldots,n\}$ and a set of edges $E$. We assume
$G$ also contains all self-loops, i.e., $i\in V\implies(i,i)\in E$.
Let $d_{i}$ denote the degree of vertex $i$. For more information, see \cite{LevinMixingTimesBook08}.

\begin{defn}(Random walk on a graph) A random walk $\mathcal{X}(G)=(X_{k})_{k\in\mathbb{Z}}$
on $V$ is characterized by the $n\times n$ transition probability
matrix $\mathbf{P}(G)=[p_{ij}]$, with $p_{ij}\triangleq\mathbb{P}(X_{k+1}=i\mid X_{k}=j)$,
and $p_{ij}>0$ only if $(i,j)\in E$, with $\sum_{j}p_{ij}=1$ $\forall i\in V$.\end{defn}

Observe that $P$ is stochastic.

\begin{defn}(Symmetric random walk) A random walk is symmetric if
$p_{ij}=p_{ji}$.\end{defn}

For a symmetric random walk $\mathbf{P}$ is doubly stochastic. 

\begin{fact}\label{fact:SymRandWalk}A random walk on $G$ is a Markov
chain with state space $V$. Given an initial distribution  $\boldsymbol{\pi}(0)$ over $V$, the distribution $\boldsymbol{\pi}(k+1)$ after $k+1$ steps
satisfies $\boldsymbol{\pi}(k+1)=\mathbf{P}\boldsymbol{\pi}(k)$ for $k=0,1,\ldots$\end{fact}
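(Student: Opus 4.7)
The plan is to verify the two assertions of the fact directly from the defining properties of the random walk given in Definition~1. The first assertion, that $\mathcal{X}(G)$ is a Markov chain on $V$, follows immediately from the fact that the transition probabilities $p_{ij} = \mathbb{P}(X_{k+1}=i \mid X_k=j)$ depend only on the present state $X_k$ and not on the earlier history $X_{k-1}, X_{k-2}, \ldots$. Since $\sum_i p_{ij}=1$ (as $\mathbf{P}$ is column-stochastic under our convention), these form a valid family of one-step transition kernels, so $\mathcal{X}(G)$ satisfies the Markov property on the finite state space $V$.

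For the second assertion, I would apply the law of total probability to write, for each $i\in V$,
\begin{equation*}
\pi_i(k+1) \;=\; \mathbb{P}(X_{k+1}=i) \;=\; \sum_{j\in V} \mathbb{P}(X_{k+1}=i \mid X_k=j)\,\mathbb{P}(X_k=j) \;=\; \sum_{j\in V} p_{ij}\,\pi_j(k).
\end{equation*}
Recognizing the right-hand side as the $i$-th entry of the matrix--vector product $\mathbf{P}\boldsymbol{\pi}(k)$ yields the recursion $\boldsymbol{\pi}(k+1) = \mathbf{P}\,\boldsymbol{\pi}(k)$. The recursion then holds for all $k\ge 0$ by induction on $k$, with base case $k=0$ being the initial distribution $\boldsymbol{\pi}(0)$.

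There is essentially no substantive obstacle here, as the statement is a direct restatement of elementary Markov chain theory specialized to the random-walk transition matrix $\mathbf{P}(G)$; the only mild care needed is to be explicit about the convention (column- vs.~row-stochastic $\mathbf{P}$) so that the matrix equation $\boldsymbol{\pi}(k+1)=\mathbf{P}\boldsymbol{\pi}(k)$ is written consistently with the definition of $p_{ij}$ in the paper. Once that convention is fixed, both claims are one-line consequences of the definitions.
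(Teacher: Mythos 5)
Your argument is correct and is the standard one-line verification (Markov property from the definition of $p_{ij}$, plus the law of total probability for the distribution recursion); the paper states this as a Fact without proof, so there is nothing to diverge from. Your remark about the row- versus column-stochastic convention is well taken, since the paper's normalization $\sum_j p_{ij}=1$ sits slightly awkwardly with writing the update as $\boldsymbol{\pi}(k+1)=\mathbf{P}\boldsymbol{\pi}(k)$, though this is immaterial for the symmetric (doubly stochastic) walks actually used later.
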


\begin{defn}(Stationary distribution of a Markov chain) A stationary distribution $\boldsymbol{\pi}^{*}$ satisfies $\boldsymbol{\pi}^{*}=\mathbf{P}\boldsymbol{\pi}^{*}$, i.e., remains invariant with time.\end{defn}

\begin{defn}(Reversible Markov chain) A Markov chain $\mathcal{X}=(X_{k})_{k\in\mathbb{Z}}$
is said to be \emph{reversible} if for all states $i$, $\pi_{j}^{*}p_{ij}=\pi_{i}^{*}p_{ji}$.
\end{defn}

\begin{fact}\label{fact:IrredAperiodUnique}An irreducible and aperiodic
Markov chain has a unique stationary distribution.\inputencoding{latin1}{
}\inputencoding{latin9}\end{fact}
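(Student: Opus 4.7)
The plan is to invoke the Perron--Frobenius theorem for irreducible non-negative matrices applied to the transition matrix $\mathbf{P}$. First I would observe that $\mathbf{P}$ is row-stochastic, so $\mathbf{1} = (1,\ldots,1)^T$ is a right eigenvector with eigenvalue $1$, and by Gershgorin's circle theorem every eigenvalue of $\mathbf{P}$ satisfies $|\lambda| \le 1$. Hence the spectral radius of $\mathbf{P}$ equals $1$.

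Next I would translate the probabilistic hypotheses into matrix-theoretic ones. Irreducibility of the Markov chain means that for every ordered pair $(i,j)$ there exists $k = k(i,j)$ with $(\mathbf{P}^k)_{ij} > 0$, which is exactly irreducibility of $\mathbf{P}$ as a non-negative matrix. Aperiodicity means that $\gcd\{k : (\mathbf{P}^k)_{ii} > 0\} = 1$ for some (hence every) $i$, which is the standard primitivity condition on top of irreducibility.

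I would then invoke Perron--Frobenius: for an irreducible non-negative matrix, the spectral radius is a simple eigenvalue admitting a strictly positive left eigenvector, unique up to positive scaling. Applied to $\mathbf{P}$ with spectral radius $1$, this yields a unique (up to normalization) probability vector $\boldsymbol{\pi}^*$ with $\boldsymbol{\pi}^* = \mathbf{P}\boldsymbol{\pi}^*$, establishing existence and uniqueness of the stationary distribution. Aperiodicity then further ensures $\lambda = 1$ is the \emph{only} eigenvalue of modulus $1$, which, while not strictly required for uniqueness of $\boldsymbol{\pi}^*$ on a finite state space, is the ingredient needed later for convergence $\mathbf{P}^k\boldsymbol{\pi}(0) \to \boldsymbol{\pi}^*$ and is therefore naturally bundled into the hypothesis.

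The only real obstacle is verifying that Perron--Frobenius applies in the exact form needed; an alternative I would keep in reserve is a coupling proof: assume two stationary distributions $\boldsymbol{\pi}_1^*, \boldsymbol{\pi}_2^*$, run two independent chains from each, and use irreducibility plus aperiodicity to couple them so that the total-variation distance $\|\boldsymbol{\pi}_1^* - \boldsymbol{\pi}_2^*\|_{\mathrm{TV}}$ is bounded by the probability they have not met by time $k$, which tends to $0$. Since both distributions are stationary, this distance is constant in $k$, forcing $\boldsymbol{\pi}_1^* = \boldsymbol{\pi}_2^*$.
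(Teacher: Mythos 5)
Your proof is correct, but there is nothing in the paper to compare it against: the paper states this as a \emph{Fact} without proof, treating it as a standard result from Markov chain theory (the surrounding text points the reader to the Levin--Peres--Wilmer book for background). Your Perron--Frobenius route is the standard linear-algebraic proof for finite state spaces, and all the key steps are in order: row-stochasticity plus Gershgorin gives spectral radius $1$, irreducibility of the chain is correctly identified with irreducibility of $\mathbf{P}$ as a non-negative matrix, and simplicity of the Perron eigenvalue with a positive eigenvector unique up to scaling delivers existence and uniqueness after normalization. You are also right to observe that aperiodicity is not needed for uniqueness on a finite state space (irreducibility alone suffices) and only enters when one wants convergence $\mathbf{P}^k\boldsymbol{\pi}(0)\to\boldsymbol{\pi}^*$, which is indeed how the paper later uses the fact. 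One cosmetic wrinkle: you call $\boldsymbol{\pi}^*$ a left eigenvector while writing $\boldsymbol{\pi}^*=\mathbf{P}\boldsymbol{\pi}^*$, which is the right-eigenvector equation; the paper's own convention ($p_{ij}=\mathbb{P}(X_{k+1}=i\mid X_k=j)$ with $\boldsymbol{\pi}(k+1)=\mathbf{P}\boldsymbol{\pi}(k)$) makes $\mathbf{P}$ effectively column-stochastic and $\boldsymbol{\pi}^*$ a right eigenvector. This does not affect the argument, since Perron--Frobenius gives simple Perron eigenvalues with positive eigenvectors for both $\mathbf{P}$ and $\mathbf{P}^{\mathrm{T}}$, but you should fix the terminology to match whichever convention you adopt. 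The coupling argument you hold in reserve is an equally valid alternative and is closer in spirit to the probabilistic treatment in the cited textbook.
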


\begin{defn}\label{def:NatRandWalk}(Natural random walk) A natural
random walk on $G$ is a random walk with \[
p_{ij}=\begin{cases}
1/2d_{i}, & (i,j)\in E,i\neq j\\
1/2 & i=j.\end{cases}\]
 \end{defn}

\begin{fact}\label{fact:NaturalRegularUniform}The natural random
walk is reversible, irreducible and aperiodic with a unique stationary
distribution $\pi_{i}^{*}=\frac{d_{i}}{\sum_{i}d_{i}}.$ When $G$
is regular, a natural random walk is also symmetric and has a uniform
stationary distribution.\end{fact}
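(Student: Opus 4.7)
The plan is to verify each claim in the fact by a direct calculation from the definition of the natural random walk, and to chain together the cited facts from the preceding list.

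First I would establish reversibility, since this is the algebraic core of the statement and will immediately yield the stationary distribution. Plugging the candidate $\pi_i^{*} = d_i / \sum_k d_k$ and the transition probability $p_{ij} = 1/(2d_j)$ (for an edge $(i,j)$ with $i\neq j$) into the detailed-balance equation, both $\pi_j^{*} p_{ij}$ and $\pi_i^{*} p_{ji}$ reduce to $1/(2\sum_k d_k)$. The case $i=j$ is trivial. Hence the chain is reversible with respect to $\pi^{*}$.

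Second, I would use detailed balance to conclude that $\pi^{*}$ is stationary: summing $\pi_i^{*} p_{ji} = \pi_j^{*} p_{ij}$ over $j$ and invoking the stochasticity $\sum_j p_{ji} = 1$ gives $\pi_i^{*} = \sum_j p_{ij}\pi_j^{*}$, so $\boldsymbol{\pi}^{*} = \mathbf{P}\boldsymbol{\pi}^{*}$. Irreducibility follows because $G$ is connected, so any two vertices are joined by a path of edges along each of which the transition probability is strictly positive. Aperiodicity is immediate from the presence of self-loops: the definition gives $p_{ii} = 1/2 > 0$, so the g.c.d.\ of return times at every state is $1$. Applying Fact~\ref{fact:IrredAperiodUnique} then guarantees that $\pi^{*}$ is the \emph{unique} stationary distribution, completing the first part of the statement.

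Third, for the regular case, suppose every vertex has degree $d_i \equiv \delta$. Then $p_{ij} = 1/(2\delta) = p_{ji}$ for every edge $(i,j)$ with $i\neq j$, and $p_{ii}=1/2$ on the diagonal, so $\mathbf{P}$ is symmetric; by the discussion following the definition of symmetric walks, this means $\mathbf{P}$ is doubly stochastic. Substituting $d_i = \delta$ into the formula $\pi_i^{*} = d_i/\sum_k d_k = \delta/(n\delta) = 1/n$ shows the stationary distribution is uniform.

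The step requiring the most care is simply the bookkeeping of the convention in the Definition of $\mathbf{P}$ (whether $p_{ij}$ represents the transition from $i$ to $j$ or vice versa) and the matching degree factor in $p_{ij} = 1/(2d_\bullet)$, so that detailed balance comes out symmetric; but beyond that, no real obstacle is anticipated, as every ingredient reduces to an explicit substitution or a direct appeal to Fact~\ref{fact:IrredAperiodUnique}.
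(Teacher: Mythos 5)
Your proposal is correct and is exactly the standard argument: verify detailed balance for $\pi_i^{*}=d_i/\sum_k d_k$, deduce stationarity, get irreducibility from connectedness and aperiodicity from the self-loops $p_{ii}=1/2$, invoke Fact~\ref{fact:IrredAperiodUnique} for uniqueness, and specialize to constant degree for the regular case. The paper states this as a Fact without proof (deferring to the Markov chain literature), so there is no alternative argument to compare against; your handling of the paper's slightly inconsistent indexing convention for $p_{ij}$ is the only point of care, and you resolve it correctly.
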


\begin{defn}\label{def:MixingTime}(Mixing time of a random walk)
For a random walk $\mathcal{X}$ with a unique stationary distribution
$\pi^{*}$, consider the Total Variational (TV) distance\footnote{The TV distance between two distributions $\mu$ and $\nu$ over a countable set $\mathcal{S}$
is defined as $\|\mu-\nu\|_{TV}\triangleq\frac{1}{2}\sum_{i\in\mathcal{S}}|\mu_{i}-\nu_{i}|$
(essentially the $\ell_{1}$ norm).} (cf. \cite[Chap. 4]{LevinMixingTimesBook08}) $d_{\mathrm{TV},i}(t;\pi_{0})\triangleq\frac{1}{2}\sum_{i}|\mathbb{P}(X_{t}=i,\pi_{0})-\pi_{i}^{*}|$
for an initial distribution $\pi_{0}$. Then the mixing time of $\mathcal{X}$ 
is defined as \[
T_{\mathrm{mix}}(\epsilon;P)\triangleq\sup_{\pi(0)}\inf\{t:d_{\mathrm{TV}}(t;P,\pi(0))\leq\epsilon\}.\]
 \end{defn}

\subsubsection{Asymptotic Notation}

We use the following asymptotic notation. For two functions $f$ and
$g$ of a variable $n$, as $n\rightarrow\infty$, we write 
\begin{itemize}
\item $g=\mathcal{O}(f)$ if the ratio $g/f$ is asymptotically finite. Further, $g=o(f)$ if this limit is zero. 
\item $g=\Omega(f)$ if $f=\mathcal{O}(g)$. Further, $g=\omega(f)\iff f=o(g)$. 
\item $g=\Theta(f)$ if $g=\mathcal{O}(f)$ and $g=\Omega(f)$. 
\end{itemize}
When $f$ and $g$ are random, these relations are defined to hold with probability one. 


\subsubsection{\label{sub:Graph-Process}Graph Sequences and the Asymptotic
Regime}

Consider a sequence of (possibly random) undirected graphs $(G_{n})$,
whose $n^{\mathrm{th}}$ member $G_{n}$ has $n$ vertices $V_{n}=\{1,2,\ldots,n\}$
and a set of edges $E_{n}$. We assume each graph contains all self-loops.
Denote the maximum and minimum node degrees of $G_{n}$ by $d_{\max}(G_{n})$
(shortened to $d_{\max}$) and $d_{\min}$ (shortened to $d_{\min}$)
respectively. We provide some standard definitions below.

\begin{defn} 
(Asymptotically regular graph) $G_{n}$ is asymptotically regular
if $d_{\mathrm{max}}(G_{n})-d_{\mathrm{min}}(G_{n})=o(1)$.
\end{defn}
\begin{defn} 
(Asymptotically almost sure validity) A property $\mathcal{P}$ is true asymptotically almost surely (a.a.s.) for
a sequence of random objects $(X_{n})$, if $\lim_{n\rightarrow\infty}\mathbb{P}\left(X_{n}\,\text{has property }\mathcal{P}\right)=1$.
\end{defn}
We obtain scaling results for the convergence of the average consensus algorithm in large networks by mapping the problem to the scaling
of mixing times of natural random walks on a sequence of graphs that are connected and regular asymptotically almost surely.
\section{\label{sec:Problem-Formulation}Problem Formulation}

\subsection{Average Consensus and Random Walks\label{sub:Consensus-RandomWalks}}
Consider a set of sensor nodes $V_n=\{1,2,\ldots,n\}$. Associate with the $i^{\mathrm{th}}$ sensor an initial observation $z_i(0)$. Given a realization of a random  \emph{message-passing} graph $G_{n}$ with vertices $V_n$ and edges $E_n$, suppose that all the vertices $i\in V_n$ synchronously update their observations as \begin{equation}
z_{i}(k+1)=\frac{1}{2}z_{i}(k)+\frac{1}{2d_{i}}\sum_{j\in N_{i}(G_{n})}(z_{j}(k)-z_{i}(k)),\qquad k = 0,1,\ldots\label{eq:ScalarStateUpdate}\end{equation}
Here $N_{i}(G_{n})$ denotes the neighborhood of vertex $i$ in $G_n$. By stacking the individual
observations $z_{i}$ to form the observation vector $\mathbf{z}$, the $(k+1)^{\mathrm{th}}$ update starting from an initial observation vector $\mathbf{z}(0)$ can be written as
\begin{equation}
\mathbf{z}(k+1)=\mathbf{W}\!_n \mathbf{z}(k).\label{eq:ConsVectUpdate}\end{equation}
 where we have defined the \emph{update matrix} $\mathbf{W}\!_n\triangleq(\mathbf{I}_n-\mathbf{\Delta}_n\mathbf{L}_n)/2$, where $\mathbf{I}_n$ denotes the $n\times n$ identity matrix, $\mathbf{\Delta}_n\triangleq\mathrm{diag}[d_{i}^{-1}]$ and $\mathbf{L}_n$ is the graph Laplacian. Notice that $\mathbf{W}\!_n$ depends on the realization of the random graph $G_n$, which remains the same for all iterations. We will analyze the speed of convergence for specific families of random graphs in the scaling limit $n\rightarrow\infty$, by deriving properties of interest that hold a.a.s. for all realizations of $G_n$. 

Without loss of generality, let $z_{i}(0)>0$, and define $z_{i}^{\prime}(0)\triangleq z_{i}(0)/\sum_{i}z_{i}(0)$
as the normalized initial observation vector. In the light of Fact \ref{fact:SymRandWalk}
and Definition \ref{def:NatRandWalk}, the iteration $\mathbf{z}^{\prime}(k+1)=\mathbf{W}\!_n\mathbf{z}^{\prime}(k)$
can now be interpreted as time-evolution of the node occupancy distribution
of a natural random walk over $G_{n}$ with a transition probability
matrix $\mathbf{W}\!_n$ \cite{Boyd06,SalehiSmallWorld07}. 

If $G_{n}$ is also connected, this equivalence with a natural random walk ensures (from
Fact \ref{fact:IrredAperiodUnique}) that the value of each vertex
asymptotically reaches $\frac{1}{n}\sum_{i}z_{i}(0)=\mathbf{1}^{\mathrm{T}}\frac{\mathbf{z}(0)}{n}$
(a more general result for a time-varying case was studied in \cite{Olfati-Saber04}).
Interpreting each vertex as a sensor and the initial values $(z_{i}(0))_{i\in V_n}$ as sensor measurements, this algorithm allows each sensor to
iteratively compute the average $\frac{1}{n}\sum_{i}z_{i}(0)$ of the initial measurement set 
by exchanging messages as described in (\ref{eq:ScalarStateUpdate}).
We will sometimes also refer to $G_{n}$ as the \emph{message-passing network}.

The rate of convergence of (\ref{eq:ConsVectUpdate}) to its steady
state value can be understood in terms of the mixing time of the natural
random walk described by $\mathbf{W}\!_n$. Indeed, by expressing $z_{i}^{\prime}$
in terms of $z_{i}$, we can write from Definition
\ref{def:MixingTime}: \begin{equation}
T_{\mathrm{mix}}(\epsilon;\mathbf{W}\!_n)=\sup_{\mathbf{z(0)}}\inf\{k:\|\mathbf{z}(k)-n^{-1}\mathbf{1}{z}_0\|_{\mathrm{TV}}\leq\epsilon z_{0}\}\label{eq:EpsCvgTime}\end{equation}
 where $z_{0}\triangleq\sum_{i}z_{i}(0)$. 

When $G_{n}$ is a.a.s. connected and regular, we know from Fact \ref{fact:NaturalRegularUniform} that the stationary distribution of the random walk is uniform a.a.s., thereby implying convergence to average consensus a.a.s. 

In this paper, we analyze random graphs based on the disk graph~\cite{Gilbert61}, which are parameterized by the disk radius (see Section \ref{sub:Network-Models}). For this family of graphs, it is well-known that the graphs are a.a.s. connected if and only if the radius remains large enough with $n$ (i.e., in the  ``supercritical'' regime~\cite{Penrose03}, see, e.g.,~\cite{GuptaKumarCriticalPower98} for a proof). In this regime, the asymptotic regularity property was formally shown to hold a.a.s. in ~\cite[Lemma 10]{Boyd06}. In fact, in \cite{Boyd06} these two properties were used to establish scaling laws for the mixing time of both the natural and the fastest mixing reversible random walks on these graphs to the uniform distribution. 

It is well-known that the mixing time of a random walk can be characterized by the second-largest eigenvalue
of $\mathbf{W}\!_n$. Denoting the eigenvalues of $\mathbf{W}\!_n$ by $\mu_{1}=1>\mu_{2}>\cdots>\mu_{n}>0$, the asymptotic convergence of the iteration (\ref{eq:ConsVectUpdate}) is determined by $\mu_{2}$. The result below formally establishes this dependence: 

\begin{thm}\label{thm:MixingTimeBounds}\cite{Sinclair92}. The $\epsilon-$mixing
time of a random walk with a doubly stochastic positive definite transition
matrix $\mathbf{W}\!_n$ on a connected graph $G_{n}$ is bounded as\[
\frac{\mu_{2}\log(2\epsilon)^{-1}}{2(1-\mu_{2})}\leq T_{\mathrm{mix}}(\epsilon;\mathbf{W}\!_n)\leq\frac{\log n-\log\epsilon}{1-\mu_{2}},\]

where $1-\mu_{2}$ is called the \emph{spectral gap} of $G_{n}$.
\end{thm}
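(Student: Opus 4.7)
The plan is to derive both bounds via the spectral decomposition of $\mathbf{W}\!_n$. Under the hypotheses, $\mathbf{W}\!_n$ is doubly stochastic, so the uniform distribution $\boldsymbol{\pi}^* = n^{-1}\mathbf{1}$ is stationary, and the Perron eigenvalue is $\mu_1 = 1$. Since the relevant graph sequence is asymptotically regular (Fact \ref{fact:NaturalRegularUniform} together with the integrality of degrees forces exact regularity for large $n$), $\mathbf{W}\!_n$ is symmetric and positive definite, hence admits an orthonormal eigenbasis $\mathbf{v}_1 = n^{-1/2}\mathbf{1}, \mathbf{v}_2, \ldots, \mathbf{v}_n$ with real eigenvalues $1 = \mu_1 > \mu_2 \geq \cdots \geq \mu_n > 0$.

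For the upper bound, I would write an arbitrary initial distribution as $\boldsymbol{\pi}(0) = \boldsymbol{\pi}^* + \sum_{i \geq 2} a_i \mathbf{v}_i$ and apply Fact \ref{fact:SymRandWalk} to get $\boldsymbol{\pi}(t) - \boldsymbol{\pi}^* = \sum_{i \geq 2} \mu_i^t a_i \mathbf{v}_i$. Parseval then gives $\|\boldsymbol{\pi}(t) - \boldsymbol{\pi}^*\|_2 \leq \mu_2^t \|\boldsymbol{\pi}(0) - \boldsymbol{\pi}^*\|_2 \leq \mu_2^t$, using $\|\boldsymbol{\pi}(0)\|_2 \leq \|\boldsymbol{\pi}(0)\|_1 = 1$. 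Converting to the TV norm by Cauchy--Schwarz introduces a $\sqrt{n}/2$ factor, so $d_{\mathrm{TV}}(t) \leq \tfrac{1}{2}\sqrt{n}\,\mu_2^t$. Requiring this to be at most $\epsilon$ and using $\log(1/\mu_2) \geq 1 - \mu_2$ yields $T_{\mathrm{mix}}(\epsilon;\mathbf{W}\!_n) \leq (\log n - \log \epsilon)/(1 - \mu_2)$, absorbing the benign $\tfrac{1}{2}$ factor into the $\log n$ term.

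For the lower bound, I would exhibit a hard initial distribution by projecting onto the $\mu_2$-eigenspace: take $\boldsymbol{\pi}(0) = \boldsymbol{\pi}^* + c\,\mathbf{v}_2$ with $c$ as large as possible subject to $\boldsymbol{\pi}(0)$ remaining a probability vector. Linearity of the update gives $\boldsymbol{\pi}(t) - \boldsymbol{\pi}^* = c\mu_2^t \mathbf{v}_2$, and the TV distance satisfies $d_{\mathrm{TV}}(t) \geq \tfrac{1}{2}\mu_2^t$ after a careful but standard normalization of $c$ and $\mathbf{v}_2$. Enforcing $d_{\mathrm{TV}}(T_{\mathrm{mix}}) \leq \epsilon$ forces $\mu_2^{T_{\mathrm{mix}}} \leq 2\epsilon$, whence $T_{\mathrm{mix}} \geq \log(1/(2\epsilon))/\log(1/\mu_2)$. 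Combining with the elementary inequality $\log(1/\mu_2) \leq (1-\mu_2)/\mu_2$ for $\mu_2 \in (0,1)$ gives $T_{\mathrm{mix}}(\epsilon;\mathbf{W}\!_n) \geq \mu_2 \log(1/(2\epsilon))/(2(1-\mu_2))$.

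The main obstacle is the lower bound, and specifically the passage from the $\ell_2$-type eigenvector bound to a TV-type bound while retaining an order-one constant. One must choose $c$ so that $\boldsymbol{\pi}(0)$ has nonnegative entries (the natural scale is $c = \Theta(1/\sqrt{n})$ or normalize via the $\ell_\infty$ norm of $\mathbf{v}_2$), and then use $d_{\mathrm{TV}}(t) \geq \tfrac{1}{2}\|\boldsymbol{\pi}(t) - \boldsymbol{\pi}^*\|_\infty \geq \tfrac{1}{2\sqrt{n}}\|\boldsymbol{\pi}(t) - \boldsymbol{\pi}^*\|_2$ or a comparable inequality. Tracking these constants yields exactly the factor of $2$ in the denominator of the stated lower bound; the rest of the argument is routine spectral manipulation.
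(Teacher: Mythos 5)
The paper does not actually prove this theorem; it is imported verbatim from Sinclair's work via the citation, so your attempt can only be judged on its own merits. Your upper bound is sound: with a symmetric, positive definite, doubly stochastic $\mathbf{W}\!_n$ the contraction $\|\boldsymbol{\pi}(t)-\boldsymbol{\pi}^*\|_2\leq\mu_2^t\|\boldsymbol{\pi}(0)-\boldsymbol{\pi}^*\|_2$, the Cauchy--Schwarz conversion to TV at the cost of $\sqrt{n}/2$, and $\log(1/\mu_2)\geq 1-\mu_2$ together give $T_{\mathrm{mix}}\leq(\log n-\log\epsilon)/(1-\mu_2)$. (Positive definiteness is what lets you bound the contraction by $\mu_2^t$ rather than $\max(|\mu_2|,|\mu_n|)^t$; do say so explicitly, and note that symmetry of $\mathbf{W}\!_n$ is really an added hypothesis here rather than something Fact~\ref{fact:NaturalRegularUniform} hands you for an arbitrary doubly stochastic matrix.)

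The lower bound has a genuine gap. Starting from $\boldsymbol{\pi}(0)=\boldsymbol{\pi}^*+c\,\mathbf{v}_2$, nonnegativity forces $c\leq\bigl(n\|\mathbf{v}_2\|_\infty\bigr)^{-1}\leq n^{-1/2}$, and your own chain $d_{\mathrm{TV}}(t)\geq\tfrac{1}{2}\|\boldsymbol{\pi}(t)-\boldsymbol{\pi}^*\|_\infty\geq\tfrac{1}{2\sqrt{n}}\|\boldsymbol{\pi}(t)-\boldsymbol{\pi}^*\|_2=\tfrac{c}{2\sqrt{n}}\mu_2^t$ then yields only $d_{\mathrm{TV}}(t)\geq\mu_2^t/(2n)$, not $\tfrac{1}{2}\mu_2^t$; the constants do not work out to the claimed factor of $2$ but to a factor of $n$ (consider $\mathbf{v}_2$ supported on two coordinates to see the loss is real). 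That extra $n$ is fatal: it turns the conclusion into $T_{\mathrm{mix}}\geq\log\bigl(1/(2n\epsilon)\bigr)/\log(1/\mu_2)$, which is vacuous for constant $\epsilon$ and for the polynomial scaling $\epsilon=n^{-\delta}$ with $\delta\leq 1$ used throughout the paper. The standard repair (and essentially Sinclair's argument) is to use $\mathbf{v}_2$ as a \emph{test function} rather than as an initial perturbation: start from the point mass $\delta_x$ at a coordinate where $|v_2(x)|=\|\mathbf{v}_2\|_\infty$, note $(\boldsymbol{\pi}(t)-\boldsymbol{\pi}^*)^{\mathrm T}\mathbf{v}_2=\mu_2^t v_2(x)$, and apply H\"older to get $\mu_2^t\|\mathbf{v}_2\|_\infty\leq\|\boldsymbol{\pi}(t)-\boldsymbol{\pi}^*\|_1\|\mathbf{v}_2\|_\infty=2d_{\mathrm{TV}}(t)\|\mathbf{v}_2\|_\infty$, hence $d_{\mathrm{TV}}(t)\geq\tfrac{1}{2}\mu_2^t$ with no dimension-dependent loss. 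From there your final step, $\log(1/\mu_2)\leq(1-\mu_2)/\mu_2$, correctly delivers $T_{\mathrm{mix}}\geq\mu_2\log(2\epsilon)^{-1}/(1-\mu_2)$, which is even stronger than the stated bound.
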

\emph{Remark:} Observe that the spectral gap controls the mixing time. In the scaling limit $n\rightarrow\infty$, the scaling of $\epsilon$ also becomes important. The logarithmic dependence on $\epsilon^{-1}$ suggests three meaningful possibilities: 
\begin{enumerate}
	\item Polynomial scaling:~$\epsilon=1/n^{\delta}$ for some fixed $\delta>0$. 
	\item Exponential scaling:~$\epsilon=\exp(-\delta^{\prime}n)$ for some fixed $\delta^\prime>0$.
	\item Constant error:~$\epsilon\ll1$ is constant.
\end{enumerate}
For polynomial and exponential error scaling, it is clear that the bounds in Theorem \ref{thm:MixingTimeBounds} are of the same order, and are $\Theta((1-\mu_2)^{-1}\log n)$ and $\Theta((1-\mu_2)^{-1}n)$ respectively. For constant error, the upper bound scales $\log n$ times faster than the lower bound, i.e., $T_{\mathrm{mix}}=\Omega((1-\mu_2)^{-1})$ and $T=\mathcal{O}((1-\mu_2)^{-1}\log n)$. 

In the sequel we assume polynomial scaling, as was done in~\cite{Boyd06}. It will become clear in the later sections that the scaling laws for exponential scaling follow from a substitution $\log n\mapsto n$.  

\emph{Spectral Gap and Cheeger's Inequality\label{sec:SpecGapIntro}}:

Intuition suggests that the mixing time of a Markov chain depends
on how {}``easy'' it is to move out of any specified region in the
state space. This property can be formalized with the notion
of \emph{conductance}. The conductance of a reversible Markov chain
on a state space $\Omega=V$ on a graph $G_{n}$ with an equilibrium
distribution $\pi^{*}$ is defined as follows~\cite{SinclairJerrum89}:

\begin{equation}
h=\min_{S\subset\Omega,\pi^{*}(S)\leq1/2}\frac{Q(S,\bar{S})}{\pi^{*}(S)},\label{eq:ConductanceDef}\end{equation}
where $\pi^{*}(S)\triangleq\sum_{i\in S}\pi^{*}(i)$ and $\bar{S}=\Omega\backslash S$,
and $Q(S,\bar{S})\triangleq\sum_{i\in s,j\in\bar{S}}\pi^{*}(i)\mathbb{P}(X_{n+1}=j|X_{n}=i)$.
Viewed in graph-theoretic terms, the numerator (\ref{eq:ConductanceDef})
measures the effective weighted flow across the cut $(S,\bar{S})$,
while the denominator measures the weighted capacity of $S$. Intuitively,
we would expect a larger conductance to correspond to a smaller mixing
time, or equivalently from Theorem \ref{thm:MixingTimeBounds}, a
larger $1-\mu_{2}$ of the underlying graph $G_{n}$. This is indeed
the case, as Cheeger's Inequality shows:

\begin{thm}\label{thm:CheegersInequality}{~\cite{Sinclair92}}. The spectral
gap of a reversible Markov chain satisfies\[
\frac{h^{2}}{2}\leq1-\mu_{2}\leq2h,\]

where $h$ is the conductance of the Markov chain. \end{thm}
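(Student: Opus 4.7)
The plan is to exploit the variational (Rayleigh--Ritz) characterization of the spectral gap. Because $P$ is reversible, it is self-adjoint on $\ell^{2}(\pi^{*})$; writing $\mathcal{E}(f,f)\triangleq\tfrac{1}{2}\sum_{i,j}\pi^{*}(i)P(i,j)(f(i)-f(j))^{2}$ for the associated Dirichlet form and $\mathrm{Var}_{\pi^{*}}(f)$ for the variance under $\pi^{*}$, one has
\begin{equation}
1-\mu_{2}\;=\;\min_{\mathrm{Var}_{\pi^{*}}(f)>0}\frac{\mathcal{E}(f,f)}{\mathrm{Var}_{\pi^{*}}(f)}.\label{eq:RayleighCheegerSketch}
\end{equation}
The two sides of Cheeger's inequality then correspond to exhibiting a clever test function (easy direction) and bounding the Dirichlet form on the actual eigenfunction via a level-set argument (hard direction).

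First I would establish the upper bound $1-\mu_{2}\le 2h$. Let $S$ attain the minimum in~(\ref{eq:ConductanceDef}), so that $\pi^{*}(S)\le 1/2$, and substitute the test function $f=\mathbf{1}_{S}-\pi^{*}(S)\mathbf{1}$ into~(\ref{eq:RayleighCheegerSketch}). Only edges crossing the cut $(S,\bar S)$ contribute, giving $\mathcal{E}(f,f)=Q(S,\bar S)$, while a direct computation yields $\mathrm{Var}_{\pi^{*}}(f)=\pi^{*}(S)\pi^{*}(\bar S)\ge \pi^{*}(S)/2$. Dividing produces $1-\mu_{2}\le 2Q(S,\bar S)/\pi^{*}(S)=2h$.

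The lower bound $1-\mu_{2}\ge h^{2}/2$ is where I expect the real work. Let $g$ be an eigenvector for $\mu_{2}$ and choose a constant $c$ so that $f\triangleq(g-c)_{+}$ is supported on a set $U$ with $\pi^{*}(U)\le 1/2$; a standard manipulation based on $Pg=\mu_{2}g$ then yields $\mathcal{E}(f,f)\le(1-\mu_{2})\|f\|_{\pi^{*}}^{2}$. It therefore suffices to prove the ``Cheeger inequality for functions'' $\mathcal{E}(f,f)\ge (h^{2}/2)\|f\|_{\pi^{*}}^{2}$ for any non-negative $f$ supported on a set of $\pi^{*}$-measure at most $1/2$. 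To this end I would order the values $f(i_{1})\ge f(i_{2})\ge\cdots$, form the super-level sets $S_{t}=\{i:f(i)^{2}\ge t\}$ (each of $\pi^{*}$-measure at most $1/2$), and apply the discrete co-area identity
\begin{equation}
\sum_{i,j}\pi^{*}(i)P(i,j)\bigl|f(i)^{2}-f(j)^{2}\bigr|\;=\;2\!\int_{0}^{\infty}\!Q(S_{t},\bar S_{t})\,dt\;\ge\;2h\!\int_{0}^{\infty}\!\pi^{*}(S_{t})\,dt\;=\;2h\|f\|_{\pi^{*}}^{2}.\label{eq:CoAreaSketch}
\end{equation}
Factoring $|f(i)^{2}-f(j)^{2}|=|f(i)-f(j)|\,|f(i)+f(j)|$ and applying Cauchy--Schwarz under the measure $\pi^{*}(i)P(i,j)$ bounds the left-hand side of~(\ref{eq:CoAreaSketch}) by $2\sqrt{2\mathcal{E}(f,f)}\,\|f\|_{\pi^{*}}$. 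Squaring the resulting chain $2h\|f\|_{\pi^{*}}^{2}\le 2\sqrt{2\mathcal{E}(f,f)}\,\|f\|_{\pi^{*}}$ then gives $\mathcal{E}(f,f)\ge h^{2}\|f\|_{\pi^{*}}^{2}/2$, as required.

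The main obstacle is this lower direction: both the truncation trick $f=(g-c)_{+}$ (which converts a signed eigenfunction into a positive one supported on a ``small'' set while preserving the correct bound on its Rayleigh quotient) and the co-area/Cauchy--Schwarz combination in~(\ref{eq:CoAreaSketch}) require careful bookkeeping. The upper direction, by contrast, is essentially a one-line test-function computation.
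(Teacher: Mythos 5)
The paper does not prove this theorem; it is quoted directly from Sinclair's work, so there is no in-paper argument to compare against. Your sketch is the standard (and correct) proof of Cheeger's inequality for reversible chains --- the test-function computation $f=\mathbf{1}_{S}-\pi^{*}(S)\mathbf{1}$ for the upper bound, and median truncation of the eigenfunction plus the co-area identity and Cauchy--Schwarz for the lower bound --- and all the key steps check out.
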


Once we know how $h$ scales with $n$ for a (random) sequence of graphs $(G_{n})$, we can use Theorem \ref{thm:CheegersInequality}
to find the scaling law for their spectral gap. This, in turn, permits the use Theorem \ref{thm:MixingTimeBounds} in deriving scaling laws
for the mixing time for iterations of the form (\ref{eq:ConsVectUpdate}) on these sequences of graphs. In the following, motivated by the need
to capture the distance-dependence and randomness in the connectivity of the nodes, we present random geometric graph models for $G_{n}$.

\subsection{\label{sub:Network-Models}Network Models}

Each point $i\in\{1,2,\ldots,n\}$ is placed uniformly randomly
in a $d-$dimensional torus $\mathcal{T}_{d}$ on $[0,1]^{d}$, i.e.,
the vertices form a binomial point process \cite{StoyanKendallMecke96}
$\Phi=\{x_{i}\}$, $i=1,2,\ldots n$, on $\mathcal{T}_{d}$.
Each element of $(G_{n})$ is based on the well-known disk graph model
\cite{Gilbert61,Penrose03}. In the following let $b_d(x,r)\equiv b(x,r)$ denote a Euclidean ball centered at $x\in\mathbb{R}^{d}$ and radius $r$, and $|b(x,r)|$ denote its volume.  

\subsubsection{\label{sub:Flat-Network-Model}Networks with Short-Range Communication}

In this case, $G_{n}$ is the $d-$dimensional disk graph parameterized
by the common \emph{communication range }$r$ of each node. The neighborhood of node $x_{i}\in\Phi$ that will be used for implementing
(\ref{eq:ScalarStateUpdate}) is 
\[ N_{x_i}(r)\triangleq\{x_j\in\Phi:\|x_{j}-x_{i}\|\leq r\}, \]

where $\|\cdot\|$ denotes the Euclidean norm. In this paper, we will always operate in the super-critical regime, i.e., $r=\omega(r_{c})$, where $r_{c}\triangleq(\frac{\log n}{n})^{1/d}$ to
ensure asymptotic connectivity and regularity of $(G_{n})$\cite{GK2000}. We label
this family of graphs as $G_{n}^{\,\mathrm{sh}}(r,d)\equiv G_{n}^{\,\mathrm{sh}}(r)$, and the update matrix by $\mathbf{W}\!_n^{\,\mathrm{sh}}$.
We refer to the points of $\Phi$ either by their location $x_i\in\mathbb{R}^d$ or by their index $i\in \mathbb{N}$.

\subsubsection{\label{sub:Tiered-Network-Model}Networks with both Short- and Selective
Long-Range Communication}

We start with a disk graph $G_{n}^{\,\mathrm{sh}}(r)$ and add long-range edges of length $s=\Theta(r^\gamma)$. The parameter $\gamma$ controls the distance over which long-range communication occurs: for a given $r$ a node can communicate with nodes farther away as $\gamma\rightarrow0$. We add the long edges as follows. 

For some $r, \eta>0$ and $0<\gamma<1$, tile the torus with hypercubes of side length $\eta r$. Let $c$ denote one of these hypercubes. Along each dimension $m=1,2,\ldots d$, let $c_{m}^{+}$ and $c_{m}^{-}$ denote the farthest hypercubes from $c$ that are less than distance $s/2$ away from $c$ along the $m^{\mathrm{th}}$
coordinate axis, the distance being measured in terms of the separation between their farthest edges. We call these hypercubes as the \emph{partner
hypercubes} of $c$. Figure \ref{fig:TieredNetworkFigure} illustrates the case of $d=2$. It is easy to see that from any vertex in $c$,
any vertex in $c_{m}^{+}$ and $c_{m}^{-}$ is at a distance of at most $\sqrt{(d-1)\eta^{2}r^{2}+s^{2}/4}\leq\frac{s}{\sqrt{2}}$ for a small enough $\eta$. 

Since $r=\omega(r_{c})$, every tile $c$ contains $n\eta^{2}r^{2}$
nodes a.a.s. Without loss of generality, let $x_{1}$ be one of these
nodes. Now add an edge between $x_{1}$ and every vertex in $c_{m}^{+},c_{m}^{-}$
for $m=1,2,\ldots,d$. Thus each of these nodes becomes a \emph{long-range
partner} of $x_{1}$. Repeat this procedure for every node in $\Phi$,
and count duplicate edges only once. Thus for $r=\omega(r_{c})$,
every node in every tile is additionally connected to $nr^{2}|b(0,1)|+2dn\eta^{2}r^{2}+o(1)$
nodes a.a.s., i.e., $G_{n}$ is regular asymptotically almost surely.
Hence an iteration of the form (\ref{eq:ConsVectUpdate}) on this
graph will converge to a uniform distribution a.a.s. We define the
resultant graph as $G_{n}^{\mathrm{l}}(r,s,d)\equiv G_{n}^{\mathrm{l}}(r,s)$ and the corresponding update matrix by $\mathbf{W}\!_n^{\,\mathrm{l}}$. 

\begin{figure}[hbt]
\centering
	\psfrag{etar}{$\eta r$}
	\psfrag{r}{$r$}
	\psfrag{leqsby2}{$\leq s/2$}
	\psfrag{gtsby2}{$>s/2$}
	\includegraphics[scale=0.7]{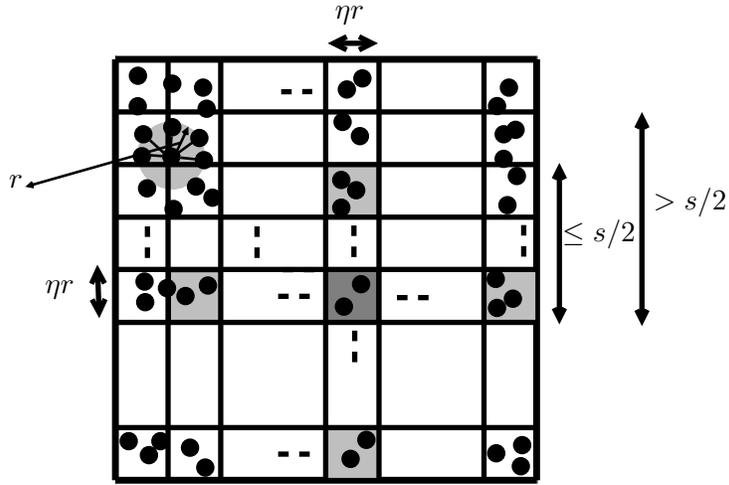}
	\caption{\label{fig:TieredNetworkFigure}An illustration of the geometric random graph models for $d=2$. The vertices are shown as black circles. In $G_n^{\,\mathrm{s}}(r)$, an edge exists between any two nodes iff they are at most at a distance $r$ (the communication range) away from each other. This is shown, for example, for the node at the center of the circle. $G_n^{\mathrm{l}}$ contains all edges in $G_n^{\,\mathrm{s}}$. Additionally each node communicates with its long-range partners. For example, for each node in the dark gray square, all nodes in the lightly shaded squares are long-range partners. These partner squares are chosen such that the distance between their farthest edges is less than $s/2$. Note that there are 4 such partner squares, two along each coordinate axis.}
\end{figure}

Notice that this model adds long edges selectively to each node; it is motivated by the observation that a small number of long
edges added to a graph can greatly increase its spectral gap, as is the case in small-world graphs (cf.~\cite[Chap. 14]{Durrett07}).
We have adapted this idea to a wireless network. Instead of adding a single additional edge to a node as is normally the case in abstract
graph-theoretic models, the inherent broadcast nature of the wireless channel allows a transmitter to broadcast its information to several
receivers that are in close proximity to one another with very little overhead. This allows multiple communication paths to form simultaneously.

We now describe the communication model, which is a well-accepted model in the study of wireless networks.

\subsection{Communication Model \label{sub:Comm-Assumptions}}

We make the following assumptions on the communication model: 
\begin{itemize}
\item All edges in $G_{n}^{\,\mathrm{sh}}$ and $G_{n}^{\mathrm{l}}$ are established
by wireless links that operate in the same frequency band (normalized to unit bandwidth).

\item Each node encodes its message in $K\gg1$ nats, such that there is
negligible quantization error. These messages are sent using a point-to-point
capacity-achieving AWGN channel code with SNR threshold $\beta$ (i.e.,
$R=\log(1+\beta)$). Transmissions are slotted with $K/R$ channel
uses allowed per slot.
\item There is no fading. 
The path-loss exponent $\alpha$ is greater than the dimension $d$
of the network, so that the interference remains finite a.s. as
the network size grows. 
\item A packet from node $i$ can be received at $j$ iff the Signal-to-Interference-Ratio
(SIR) at node $j$, $\text{SIR}_{ij}$, is greater than a known constant
$\beta>0$. Therefore for any sender $i$ and receiver $j$, the link
$i\rightarrow j$ will be in outage iff \begin{equation}
\frac{\|x_{j}-x_{i}\|^{-\alpha}}{\sum_{k\in\mathcal{S}\backslash\{i\}}\|x_{j}-x_{k}\|^{-\alpha}}<\beta.\label{eq:SINRModel}\end{equation}
 where $\mathcal{S}$ is the set of all senders that transmit in the
same slot as $i$. This is the well-known interference-limited physical model \cite{GK2000}\footnote{It is possible to derive our scaling results by including both noise and interference in the SINR model at the cost of making equations and derivations more cumbersome while distracting from the main message of the paper, which is the analysis of the performance with interference. Hence we focus on the interference-limited case.}.
\item The medium-access scheme is TDMA with spatial re-use. 
\end{itemize}
Thus the successful formation of each edge in a graph $G_{n}$ is
mapped to a successful link formation in each direction. Notice that
(\ref{eq:SINRModel}) models that fact that there is a limit to the
number of edges that can be formed simultaneously, and consequently
on the maximum rate at which a given message-passing graph can be
established. For a given TDMA protocol, the rate of topology formation
is thus determined by its \emph{schedule length} in time-slots. Since
we investigate networks in the scaling limit, we will investigate
the scaling properties of the fastest TDMA protocols that can establish
a given sequence of random graphs $(G_{n})$ (i.e., have the smallest
schedule length a.a.s.) 

%
%
\subsection{Quantifying the Effective Speed of Convergence}

Note that the mixing time, which is a function of the update matrix $\mathbf{W}\!_n$, the smallest \emph{number of iterations} to converge to an $\epsilon-$ball around the average consensus point. This is different from the \emph{time taken} to taken to converge to this ball with a finite rate of topology formation in each iteration. For example, in Fig. \ref{fig:SixNodeExample}, due to interference constraints, the shortest schedule to construct this topology has at least three time slots. Hence message-passing iterations using this topology can occur no faster than once in every three time slots. 

Thus for a topology $G_n$ and an update matrix $\mathbf{W}\!_n$, the \emph{smallest} effective time to converge is the \emph{product} of the mixing time $T_{\mathrm{mix}}(\epsilon;\mathbf{W}\!_n)$ of a topology and the length $T^{*}(G_n,\beta)$ of the \emph{shortest} TDMA schedule that constructs the topology in each iteration. We call this the \emph{Slot Mixing Time}. We formally state it below for future reference:

\begin{defn}
\label{def:SlotMixingTime}(Slot Mixing Time) The Slot Mixing Time $T_{\mathrm{slots}}(G_n)\equiv T_{\mathrm{slots}}(G_n,\mathbf{W}\!_n,\beta,\epsilon)$ is defined as the product
\[ T_{\mathrm{slots}}(G_n)\triangleq T_{\mathrm{mix}}(\epsilon;\mathbf{W}\!_n)\cdot T^{*}(G_n,\beta), \]
where $T_{\mathrm{mix}}(\epsilon;\mathbf{W}\!_n)$ is the $\epsilon-$mixing time of iterations using a message-passing graph $G_n$ and an update matrix $\mathbf{W}\!_n$ and $T^{*}(G_n,\beta)$ is the length of the shortest TDMA schedule that constructs $G_n$ in time slots.
\end{defn}

Notice that in general $T_{\mathrm{slots}}(G_n)$ depends on the realization of the random graph $G_n$. We will analyze the scaling of $T_{\mathrm{slots}}(G_n)$ for the families of random geometric graphs described in Section \ref{sub:Network-Models}.

\subsection{Asymptotic Behavior}

From Sections \ref{sub:Consensus-RandomWalks} and \ref{sub:Comm-Assumptions}
we notice that the problem involves:
\begin{itemize}
\item The network size $n$.
\item The short link distance $r$. 
\item The parameter $\gamma$ that controls the length of long links.
\end{itemize}
We will study the mixing time in an interference-limited network in
the regime $n\rightarrow\infty$. 

\section{\label{sec:Cvg-Flat-Networks}Convergence in Networks with Small
Communication Range}

\subsection{Characterizing the Spectral Gap\label{sub:SRMixingTimeIterations}}

The spectral gap for the disk graph is known to be $\Theta(r^{2})$, independent of network dimension~\cite{Boyd06}. 
Using Cheeger's Inequality (Theorem \ref{thm:MixingTimeBounds}), it was shown that the mixing
time of the \emph{fastest mixing} reversible random walk with a uniform
distribution on $G_{n}^{\,\mathrm{sh}}(r)$, for polynomial scaling $\epsilon=1/n^{\delta}$,
$\delta>0$ scales as\begin{equation}
T_{\mathrm{mix}}(\mathbf{W}\!_n^{\,\mathrm{sh}})=\Theta(r^{-2}\log n).\label{eq:MixingTimeRandomWalk}\end{equation}

It was also shown therein that the mixing time for the natural random walk on $G_n^{\,\mathrm{sh}}$ is also $\Theta(r^{-2}\log n)$. We will now use combine the scaling law for the mixing time with the fastest rate of topology formation implied by the communication
model in Section \ref{sub:Comm-Assumptions}.

\subsection{Interference-Limited Topology Formation\label{sec:ShortRangeInterference}}

We now prove two results that follow from the assumptions made in
Section \ref{sub:Comm-Assumptions}. 

\begin{prop} \label{pro:ND-Transmission-Steps-LB} Consider a system of $n$ nodes on a $d-$dimensional torus with a short-range communication range $r$, that communicate using point-to-point codes with SINR threshold $\beta$, with $\alpha>d$ being the path-loss exponent. Assuming the short-range network model in Section \ref{sub:Flat-Network-Model} and the communication model described in \ref{sub:Comm-Assumptions}, the length of the shortest TDMA schedule that constructs $G_{n}^{\,\mathrm{sh}}$ has no fewer than $C_{1}nr^{d}\beta^{d/\alpha}$ slots a.a.s., for some positive
constant $C_{1}$.\end{prop}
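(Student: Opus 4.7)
The plan is a ratio argument: lower-bound the total number of transmission events required to construct $G_n^{\,\mathrm{sh}}$, upper-bound the number of transmissions that can succeed concurrently in a single slot under the SIR constraint~\eqref{eq:SINRModel}, and divide. The numerator is driven by the existence, for every node, of a ``long'' outgoing edge that must be delivered wirelessly; the denominator is driven by an SIR-based packing argument on the resulting useful transmitter-receiver pairs.

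For the numerator, I first observe that in the supercritical regime $r=\omega(r_c)$, a.a.s.\ every node $i$ has at least one neighbor $y_i$ in the annulus $b(x_i,r)\setminus b(x_i,r/2)$: the binomial count of nodes in the annulus has mean $\Theta(nr^d)=\omega(\log n)$, so a Chernoff bound plus a union bound over the $n$ nodes gives the claim. Since the edge $(i,y_i)$ must be formed for $G_n^{\,\mathrm{sh}}$ to be constructed, there is a slot $t_i$ in any valid TDMA schedule in which $i$ transmits and $y_i$ decodes. Writing $\mathcal{S}_t' \triangleq \{i : t_i = t\}$, we have $\sum_{t=1}^{T^*}|\mathcal{S}_t'|=n$.

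For the denominator, fix any slot $t$ and two distinct $i,k\in\mathcal{S}_t'$. Because $y_i$ successfully decodes $i$'s transmission with $\|x_i-y_i\|\geq r/2$, the SIR condition at $y_i$ forces $\|x_k-y_i\|\geq \beta^{1/\alpha}\|x_i-y_i\|\geq \tfrac{1}{2}\beta^{1/\alpha}r$, and the triangle inequality then yields $\|x_i-x_k\|=\Omega(\beta^{1/\alpha}r)$ uniformly in $i,k$. Consequently, balls of radius $\Theta(\beta^{1/\alpha}r)$ centered at the transmitters in $\mathcal{S}_t'$ are pairwise disjoint, and a standard volume-packing argument on the unit torus $\mathcal{T}_d$ yields $|\mathcal{S}_t'|\leq C_2/(\beta^{d/\alpha}r^d)$ for a constant $C_2$ depending only on $\alpha$ and $d$. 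Combining, $n=\sum_t|\mathcal{S}_t'|\leq T^*\cdot C_2/(\beta^{d/\alpha}r^d)$, which rearranges to $T^*\geq C_1 n r^d \beta^{d/\alpha}$ a.a.s.\ with $C_1=1/C_2$.

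The main obstacle is the packing step: a priori the useful transmitter-receiver distances $d_i=\|x_i-y_i\|$ can lie anywhere in $(0,r]$, and an adversarially small $d_i$ would shrink the per-transmitter exclusion zone and weaken the bound. The annular-neighbor guarantee established in the first step is precisely what pins $d_i=\Theta(r)$ uniformly in $i$, producing the stated $\beta^{d/\alpha}r^d$ scale for the exclusion volume and hence for the packing density.
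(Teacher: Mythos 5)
Your overall strategy is the one the paper uses: extract, for every node, a required transmission over a link of length $\Theta(r)$ (the paper's ``farthest intended receiver'' in a ring of inner radius $r(1-\delta'')$, your annulus neighbor $y_i$ with $\|x_i-y_i\|\geq r/2$), and use the SIR condition (\ref{eq:SINRModel}) at that link's receiver to carve out an exclusion region of radius $\Theta(\beta^{1/\alpha}r)$ that limits concurrency. Only the bookkeeping differs: the paper counts the $\Theta(nr^d\beta^{d/\alpha})$ nodes inside a single guard zone and argues their transmissions must be serialized, whereas you count all $n$ required long transmissions and divide by a per-slot packing bound. Your count-and-divide organization, with the explicit Chernoff-plus-union-bound justification of the annulus neighbor, is a clean rendering of the same idea.

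There is, however, a genuine gap in your packing step. From $\|x_k-y_i\|\geq \tfrac12\beta^{1/\alpha}r$ and $\|x_i-y_i\|\leq r$, the triangle inequality gives only
\[
\|x_i-x_k\|\;\geq\;\tfrac12\beta^{1/\alpha}r - r,
\]
which is $\Omega\bigl(\beta^{1/\alpha}r\bigr)$ only when $\beta^{1/\alpha}$ exceeds $2$ by a constant factor, i.e.\ when $\beta>2^{\alpha}$. For any fixed $\beta\leq 2^{\alpha}$ the right-hand side is nonpositive, the claimed pairwise transmitter separation is vacuous, and the disjoint-balls argument --- hence the bound on $|\mathcal{S}_t'|$ and the conclusion --- collapses; yet the proposition is asserted for every constant $\beta>0$. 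The paper sidesteps this by keeping the exclusion ball centered at the \emph{receiver}, where (\ref{eq:Lower-Bound-SIR-condition}) directly forbids any other concurrent transmitter within radius $\beta^{1/\alpha}\|x_i-y_i\|$, with no subtraction of the link length. To repair your version you must either restrict to large $\beta$ (which does cover the regime $\beta(n)=\Omega(1)$ growing, used later in the paper) or add a separate small-$\beta$ argument; for instance, an aggregate-interference count shows that at most $O(1+\beta^{-1})$ successful links of length $\Theta(r)$ can have their transmitters in any ball of radius $r$ (each interferer in such a ball contributes at least $(3r)^{-\alpha}$ at the receiver, while the tolerable interference is $(r/2)^{-\alpha}/\beta$), which gives $|\mathcal{S}_t'|=O\bigl(r^{-d}(1+\beta^{-1})\bigr)$ and hence $T^{*}=\Omega(nr^{d})$ for fixed $\beta$, recovering the statement with a $\beta$-dependent constant in that regime.
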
 \begin{proof} Let $\mathcal{S}$ be
the set of concurrent transmitters at any given time. Suppose node
$j$ is an intended receiver of a transmitter $i\in\mathcal{S}$.
Then $i$'s message is decoded correctly iff (\ref{eq:SINRModel})
is satisfied. Thus for all $k\in\mathcal{S}\backslash\{i\}$, \begin{eqnarray}
\|x_{j}-x_{k}\| & \geq & \beta^{1/\alpha}\|x_{j}-x_{i}\|.\label{eq:Lower-Bound-SIR-condition}\end{eqnarray}
Clearly this is true even for the \emph{farthest }intended receiver. It is easy to show that such a receiver lies a.a.s. in a ring of inner
radius $s(1-\delta^{''})$ for some fixed $\delta^{''}>0$. We thus conclude
$\|x_{k}-x_{j}\|\geq r(1-\delta^{''})\beta^{1/\alpha}\triangleq r_{\min}$
a.a.s.

This suggests that any TDMA protocol allowing $i$ to pass a message
to its farthest node $j$ needs to set up a guard zone of radius no
smaller than $r_{\min}$ around $j$. Since every node inside this
guard zone must transmit at least once to form the required message
passing graph, any TDMA protocol that constructs the message passing
graph $G_{n}^{\,\mathrm{sh}}$ requires least $\sum_{x\in\Phi}\mathbf{1}_{x\in \Phi\cap b(0,r_{\min})}$
slots. Here the indicator $\mathbf{1}_{x\in\Phi\cap b(0,r_{\min})}$ is used to indicate the existence of the point $x\in\Phi$ inside the ball $b(0,r_{\min})$. The summation is over all points $x\in\Phi$. 

For $r=\omega(r_{c})$, each such ball has $n|b(0,r_{\min})|=nr^{d}\beta^{d/\alpha}(1-\delta)^{d}|b(0,1)|+o(1)\geq C_{1}nr^{d}\beta^{d/\alpha}$ a.a.s.,
where $C_{1}=0.5(1-\delta)^{d}|b(0,1)|$. \end{proof} 

\begin{prop} \label{pro:2D-Transmission-Steps-UB}Consider the network
model in Section \ref{sub:Flat-Network-Model} and the communication
model described in \ref{sub:Comm-Assumptions}. The length of the
shortest TDMA schedule that constructs $G_{n}^{\,\mathrm{sh}}$ has at
most $C_{2}nr^{d}\beta^{d/\alpha}$ slots a.a.s., for some positive
constant $C_{2}$.\end{prop}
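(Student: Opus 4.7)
The plan is to exhibit an explicit TDMA schedule that constructs $G_{n}^{\,\mathrm{sh}}$ in $O(nr^{d}\beta^{d/\alpha})$ slots, thereby matching the lower bound of Proposition~\ref{pro:ND-Transmission-Steps-LB} up to a constant. The construction uses standard spatial reuse: tessellate the torus into small hypercubes, then cycle through patterns of simultaneously-active hypercubes chosen far enough apart that the SIR at every intended receiver exceeds $\beta$.

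First, tile $\mathcal{T}_{d}$ with hypercubes of side length $\eta r$ for a small constant $\eta>0$. Because $r=\omega(r_{c})$, standard concentration arguments for a binomial point process imply that every hypercube contains $\Theta(nr^{d})$ nodes a.a.s. Next, fix $K=\lceil c\,\beta^{1/\alpha}\rceil$ for a constant $c$ to be chosen below, and group the hypercubes into super-cells consisting of $K^{d}$ adjacent hypercubes. Inside each super-cell designate one hypercube as ``active'' in any given slot, and cycle through the $K^{d}$ possible choices in successive phases. In each phase, the nodes of each active hypercube are given distinct slots, one after another, in which to broadcast; a node, when it is its turn, declares every neighbor within distance $r$ an intended receiver. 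Every edge of $G_{n}^{\,\mathrm{sh}}$ is therefore covered. The phase length is $\Theta(nr^{d})$ (the population of an active hypercube), there are $K^{d}=\Theta(\beta^{d/\alpha})$ phases, and the total schedule length is $\Theta(nr^{d}\beta^{d/\alpha})$.

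The correctness of the schedule reduces to verifying that SIR $\geq\beta$ at every intended receiver. A transmitter and an intended receiver lie at distance at most $r$, so the received signal power is at least $r^{-\alpha}$. Every concurrent interferer sits in a different super-cell and is at distance at least $(K-2)\eta r=\Theta(r\beta^{1/\alpha})$ from the receiver. Organizing interferers into concentric ``shells'' of super-cells indexed by $k\geq 1$, shell $k$ contains $O(k^{d-1})$ interferers each at distance $\Omega(kr\beta^{1/\alpha})$, so the aggregate interference is bounded by
\[
I \;\leq\; C\sum_{k=1}^{\infty}\frac{k^{d-1}}{(k r \beta^{1/\alpha})^{\alpha}} \;=\; \frac{C'}{(r\beta^{1/\alpha})^{\alpha}},
\]
where the series converges because $\alpha>d$. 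Therefore $\mathrm{SIR}\geq r^{-\alpha}/I\geq c^{\alpha}/C'$, which exceeds $\beta$ once $c$ is chosen large enough, independently of $n$ and $r$.

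The main obstacle is precisely this SIR verification: choosing the reuse factor $c$ so that the worst-case periodic interference falls strictly below $r^{-\alpha}/\beta$ while simultaneously keeping $K=\Theta(\beta^{1/\alpha})$, which is what allows the total schedule length to remain $O(nr^{d}\beta^{d/\alpha})$. Once this choice is made, the counting of phases and the a.a.s.\ density of nodes per hypercube together yield the claimed upper bound $C_{2}nr^{d}\beta^{d/\alpha}$.
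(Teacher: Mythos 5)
Your construction is essentially the paper's own proof: both tessellate the torus at a scale proportional to $r$, let one node per active cell transmit with a spatial-reuse separation of $\Theta(r\beta^{1/\alpha})$ between concurrent transmitters, bound the aggregate interference by summing over concentric shells of interferers (convergent because $\alpha>d$), and count $\Theta(nr^{d}\beta^{d/\alpha})$ slots in total --- the paper merely folds the factor $\beta^{1/\alpha}$ into the tile side $\theta r$ and uses a fixed $2^{d}$-phase parity coloring, whereas you keep tiles of side $\eta r$ and cycle through $K^{d}$ offset phases, which is the same schedule reorganized. The one detail to tighten is the interferer--receiver distance: a receiver may lie up to $r$ (i.e.\ $1/\eta$ hypercubes) from its transmitter, so the bound should be $(K-1)\eta r - r$ rather than $(K-2)\eta r$, and your displayed SIR lower bound should retain the factor $\beta$ (it is $\Theta(c^{\alpha}\beta)$, not $c^{\alpha}/C'$); both points are absorbed by taking $c$ (the paper's $\theta$) large enough, exactly as the paper does when it subtracts $r$ and requires $\theta\geq 1+(\xi\beta)^{1/\alpha}$.
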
 
\begin{proof} The proof involves construction
of a feasible TDMA schedule whose length is $C_{2}nr^{d}\beta^{d/\alpha}$
Let $x\triangleq\theta r$ for some fixed $\theta>1$. Consider
the lattice $\mathbb{L}$ that consists of points on the scaled integer
lattice $x\mathbb{Z}^{2}$ that also lie on the torus. In other words,
$\mathbb{L}=x\mathbb{Z}^{2}\cap\mathcal{T}_{2}(n)$. Partition $\mathbb{L}$
into sublattices as follows: 
\begin{itemize}
\item $\mathbb{L}_{00}\triangleq\{(ix,jx)\in\mathbb{L}\,:\, i\,\text{and}\, j\,\text{are\ensuremath{\,}even}\}$ 
\item $\mathbb{L}_{01}\triangleq\{(ix,jx)\in\mathbb{L}\,:\, i\,\text{even},\, j\,\text{odd}\}$ 
\item $\mathbb{L}_{10}\triangleq\{(ix,jx)\in\mathbb{L}\,:\, i\,\text{odd},\, j\,\text{even}\}$ 
\item $\mathbb{L}_{11}\triangleq\{(ix,jx)\in\mathbb{L}\,:\, i\,\text{and}\, j\,\text{are\ensuremath{\,}odd}\}$ 
\end{itemize}
With each lattice site $p\in\mathbb{L}$ one can associate the tile
$\tau_{p}=p+[0,x]^{2}$ that lies within the torus $\mathcal{T}_{2}(n)$.
Denote by $\mathbb{T}_{ij}$ the set of such tiles associated with
each of the points in $\mathbb{L}_{ij}$, $i,j=0,1$. For example,
$\mathbb{T}_{00}\triangleq\{\tau_{p}:\, p\in\mathbb{L}_{00}\}$. Thus
$\{\mathbb{T}_{ij}\}$ partition the torus $\mathcal{T}_{2}(n)$.

The idea behind such a partition is to enable spatial re-use. Consider
the following four-phase MAC protocol consisting of phases 00, 01,
10, 11. In phase $ij$ at most one node from each tile in $\mathbb{T}_{ij}$
is allowed to transmit. The protocol ensures that each node transmits
exactly once.

The next step is to show that this protocol provides the desired connectivity
to each node every $C_{2}nr^{2}\beta^{2/\alpha}$ time slots for some
positive $C_{2}$. To this end, we first show that the interference
at each intended receiver is bounded from above and can be made smaller
than any $\beta>0$ by a suitable choice of $\theta$.

Consider one such transmission in phase 00. Let $\mathcal{S}\subset\mathbb{T}_{00}\cap V_{n}$
be the set of all transmitters. Consider a transmitting node $i$
in tile $\tau_{p}$ where $p=(0,0)$, i.e., a tile at the origin.
To remain feasible, the protocol must satisfy (\ref{eq:SINRModel})
for each successful link. For any $i,j,k$, it is clear that \begin{eqnarray*}
\|x_{k}-x_{j}\| & = & \|x_{k}-x_{i}-(x_{j}-x_{i})\|\\
 & \geq & \|x_{k}-x_{i}\|-\|x_{j}-x_{i}\|\\
 & \geq & \|x_{k}-x_{i}\|-r,\end{eqnarray*}
since $\|x_{j}-x_{i}\|\leq r$. Therefore for a transmitter at $x_{i}$,
the interference power at any intended receiver at $x_{j}$ can be
upper bounded as \begin{equation}
\sum_{k\in\mathcal{S}\backslash\{i\}}\|x_{k}-x_{j}\|^{-\alpha}\leq\sum_{k\in\mathcal{S}\backslash\{i\}}\left(\|x_{k}-x_{i}\|-r\right)^{-\alpha},\label{eq:Interference-Power-UB}\end{equation}
 where the right hand side is independent of $j$. By the design of
the protocol, an interferer $k$ for any intended receiver of the
message from $i$ must lie in a tile distinct from $\tau_{(0,0)}$.
Moreover, such a tile should lie within $\mathbb{T}_{00}$; thus the
protocol imposes a lower bound on the minimum distance between any
two concurrent transmitters. Using geometrical arguments (see Figure
\ref{fig:2D-Tiling}), 
the right hand side of (\ref{eq:Interference-Power-UB}) is upper
bounded as 
\begin{align}
 & \sum_{k\in\mathcal{S}\backslash\{i\}}\left(\|x_{k}-x_{i}\|-r\right)^{-\alpha}\nonumber \\
 & \leq\sum_{l=1}^{\infty}8l\left((2l-1)\theta r-r\right)^{-\alpha}\nonumber \\
 & =8r^{-\alpha}\sum_{l=1}^{\infty}l\left((2l-1)\theta-1\right)^{-\alpha}\nonumber \\
 & \leq8r^{-\alpha}\left(((\theta-1)^{-\alpha}+\sum_{l=2}^{\infty}l((2l-1)\theta-\theta)^{-\alpha}\right)\nonumber \\
 & =8r^{-\alpha}\left((\theta-1)^{-\alpha}+2^{-\alpha}\theta^{-\alpha}\sum_{l=2}^{\infty}l(l-1)^{-\alpha}\right)\nonumber \\
 & \leq\xi r^{-\alpha}(\theta-1)^{-\alpha},\label{eq:UB-on-Interference-Power-UB}\end{align}
\begin{figure}
\centering 
\includegraphics[width=3.2in]{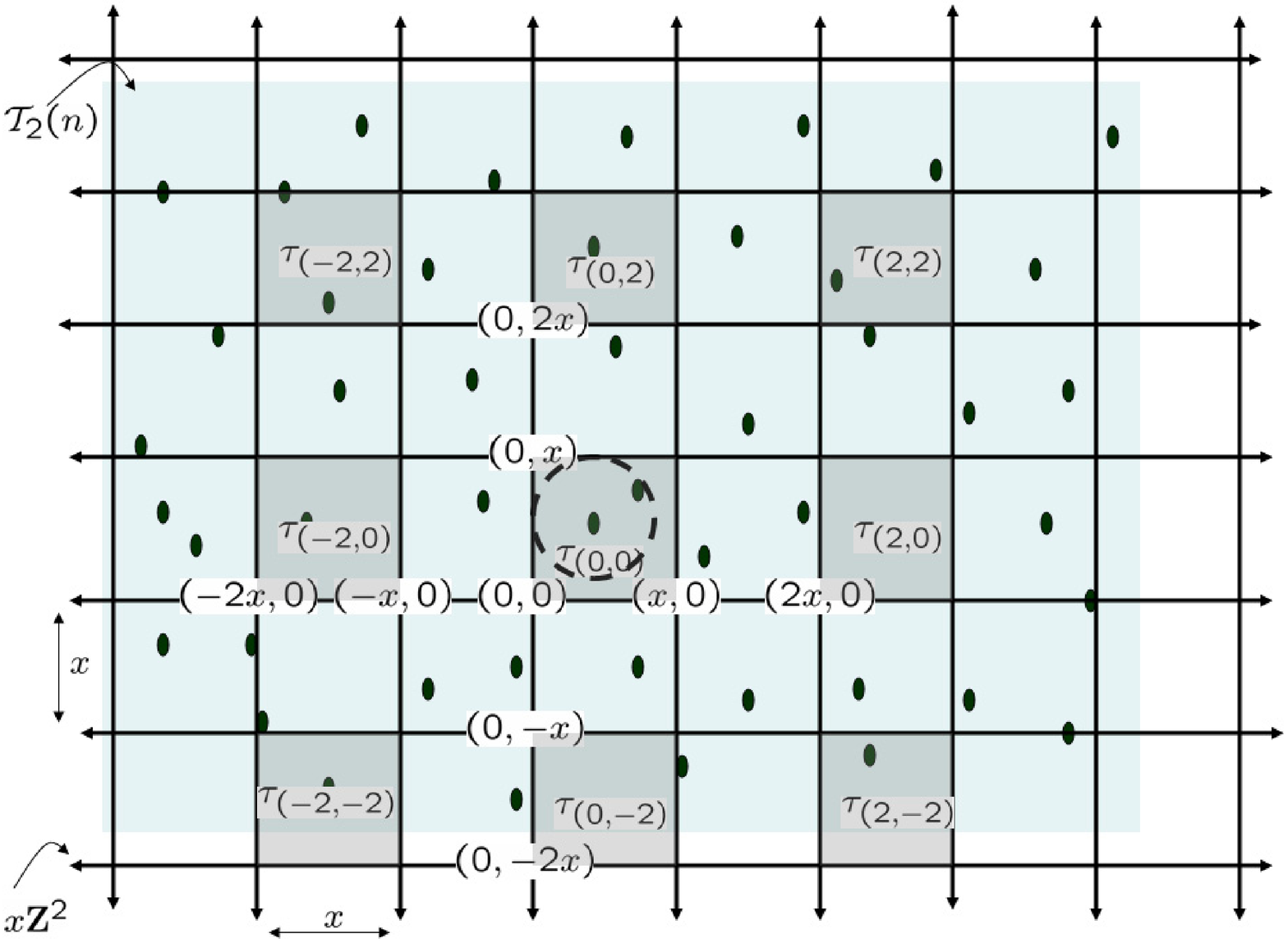} 

\caption{Geometric reasoning underlying the proof of Proposition \ref{pro:2D-Transmission-Steps-UB}.
The location of a typical transmitter in $\tau_{0,0}$ and one of
its intended receivers is shown. The nearest interferers and their
respective tiles are $\tau_{2,0},\tau_{2,2},\tau_{0,2},\tau_{-2,2},\tau_{-2,0},\tau_{-2,-2},\tau_{0,-2},\tau_{2,-2}$.
The signal power from any of one these interferers at the intended
receiver is no larger than that received from the closest interferer
allowed by the protocol. The protocol ensures that this nearest distance
is no smaller than $x=\theta r$.}

\label{fig:2D-Tiling} 
\end{figure}

for some fixed $\xi>0$, since the sum converges for $\alpha>2$ (in general, for $\alpha>d$, as assumed in the communication model). The SIR condition (\ref{eq:SINRModel})
is guaranteed to be satisfied at every intended $j$, if $\theta$
is chosen such that \begin{eqnarray*}
\frac{r^{-\alpha}}{\xi r^{-\alpha}(\theta-1)^{-\alpha}} & \geq & \beta\\
\implies\theta & \geq & 1+(\xi\beta)^{\frac{1}{\alpha}}.\end{eqnarray*}
 For a suitable choice of $\xi$, we can set $\theta=10(\xi\beta)^{\frac{1}{\alpha}}$.

For $s=\omega(s_{c})$, the number of nodes in each tile is $nx^{2}+o(1)$
a.a.s. Hence as $n\rightarrow\infty$, the protocol constructed requires
$4nx^{2}+o(1)\leq C_{2}ns^{2}\beta^{2/\alpha}$ 
transmissions almost surely to establish the necessary connectivity
to each node in the network, where $C_{2}\geq400\xi^{2/\alpha}$.
By optimality, the number of slots $T^{*}$ in the shortest TDMA schedule
cannot exceed this number. 
\end{proof}

The results from Propositions \ref{pro:ND-Transmission-Steps-LB}
and \ref{pro:2D-Transmission-Steps-UB} lead to the following corollary.
\begin{cor}\label{cor: ShortestTDMAschedule} If $T^{*}(G_n^{\,\mathrm{sh}}(r),\beta)$
denotes the length of the shortest TDMA schedule, then as $n\rightarrow\infty$, a.a.s.:
\begin{enumerate}
\item For fixed $\beta$, $T^{*}(G_n^{\,\mathrm{sh}},\beta)=\Theta(nr^{d})$.
\item When $\beta\equiv\beta(n)=\Omega(1)$, $T^{*}(G_n^{\,\mathrm{sh}},\beta)=\Omega(nr^{d}\{\beta(n)\}^{\frac{d}{\alpha}})$.
\end{enumerate}
\end{cor}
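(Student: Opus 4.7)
The plan is that this corollary is essentially a bookkeeping consequence of Propositions \ref{pro:ND-Transmission-Steps-LB} and \ref{pro:2D-Transmission-Steps-UB}, so I will not introduce any new machinery; I will simply combine the two bounds and separate out the two cases on $\beta$.

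For part (1), $\beta$ is a fixed positive constant, so $\beta^{d/\alpha}$ is a constant that can be absorbed into the multiplicative constants. Proposition \ref{pro:ND-Transmission-Steps-LB} gives $T^{*}(G_n^{\,\mathrm{sh}},\beta)\geq C_1 n r^d \beta^{d/\alpha}$ a.a.s., which yields $T^{*}=\Omega(n r^d)$. Proposition \ref{pro:2D-Transmission-Steps-UB} gives $T^{*}\leq C_2 n r^d \beta^{d/\alpha}$ a.a.s., yielding $T^{*}=\mathcal{O}(n r^d)$. Combining these establishes $T^{*}=\Theta(n r^d)$ a.a.s., as claimed.

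For part (2), we allow $\beta\equiv\beta(n)=\Omega(1)$, i.e., $\beta$ is bounded away from zero but may grow with $n$. Here only the lower bound from Proposition \ref{pro:ND-Transmission-Steps-LB} is needed: we have $T^{*}\geq C_1 n r^d \{\beta(n)\}^{d/\alpha}$ a.a.s., which is exactly $T^{*}=\Omega(n r^d \{\beta(n)\}^{d/\alpha})$. The only subtlety to check is that the argument behind Proposition \ref{pro:ND-Transmission-Steps-LB} does not hide a constant depending on $\beta$ in a way that breaks as $\beta(n)\to\infty$: the only $\beta$-dependence enters through $r_{\min}=(1-\delta'')\beta^{1/\alpha} r$, and the a.a.s.\ count of points inside a ball of radius $r_{\min}$ uses $r=\omega(r_c)$ only, so the bound holds uniformly in $\beta$ provided $r_{\min}=\omega(r_c)$; since $\beta(n)=\Omega(1)$ and $r=\omega(r_c)$, this remains true.

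The main (minor) obstacle is precisely this uniformity check in part (2); no matching upper bound is claimed because the tiling protocol constructed in the proof of Proposition \ref{pro:2D-Transmission-Steps-UB} fixes $\theta=10(\xi\beta)^{1/\alpha}$, and as $\beta$ grows the number of tiles scales as $\theta^d$, so the $\mathcal{O}$-bound already matches $n r^d \beta^{d/\alpha}$ in order and no further work is required for part (1), while part (2) intentionally states only the lower bound. I would therefore write the corollary proof as two short paragraphs, one per case, each a direct invocation of the corresponding proposition.
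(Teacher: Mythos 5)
Your proposal is correct and follows essentially the same route as the paper: part (1) by sandwiching $T^{*}$ between the lower bound of Proposition~\ref{pro:ND-Transmission-Steps-LB} and the upper bound of Proposition~\ref{pro:2D-Transmission-Steps-UB} with $\beta^{d/\alpha}$ absorbed into constants, and part (2) by invoking only the lower bound and observing that its constant $C_1$ does not depend on $\beta$. Your explicit uniformity check that $r_{\min}=\omega(r_c)$ persists when $\beta(n)=\Omega(1)$ is a slightly more careful justification of the same point the paper makes by simply asserting that $C_1$ is independent of $\beta$.
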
 \begin{proof} Claim 1 is evident from the results of Propositions
\ref{pro:ND-Transmission-Steps-LB} and \ref{pro:2D-Transmission-Steps-UB}.

For some constants $C_{1}$ and $C_{2}$, we have from Propositions
\ref{pro:ND-Transmission-Steps-LB} and \ref{pro:2D-Transmission-Steps-UB},
a.a.s. for large $n$ and a fixed $\beta$, \[
C_{1}nr^{d}\beta^{d/\alpha}\leq T^{*}(G_n^{\,\mathrm{sh}},\beta)\leq C_{2}nr^{d}\beta^{d/\alpha}.\]

Since $C_{1}$ (but not $C_{2}$) is independent of $\beta$, we can write for $n\rightarrow\infty$, when $\beta\equiv\beta(n)=\Omega(1)$ $T^{*}(G_n^{\,\mathrm{sh}})=\Omega(nr^{d}\beta^{\frac{d}{\alpha}}(n))$.\end{proof}

If all nodes had independent point-to-point channels between one another,
the rate of topology formation would be $\Theta(1)$. For a wireless channel, however, Corollary \ref{cor: ShortestTDMAschedule} suggests that it requires $\Theta(1/nr^d)$ even with optimum spatial re-use. Thus better-connected disk graphs are penalized by a smaller rate of topology formation. We combine the mixing time result (\ref{eq:MixingTimeRandomWalk}) to examine the scaling law for the effective \emph{time} necessary for convergence in the next section.


\subsection{Rate of Convergence\label{sec:Convergence-In-Dense-Networks}}

\subsubsection{Slot Mixing Time}

We now analyze the asymptotic convergence behavior of the distributed averaging algorithm (\ref{eq:ConsVectUpdate}) in a dense network
as $n\rightarrow\infty$. From the earlier sections, we know the scaling laws for this regime for:
\begin{enumerate}
	\item The \emph{number of iterations} necessary to a.s. reach an $\epsilon-$ball (from (\ref{eq:EpsCvgTime})).
	\item The \emph{shortest TDMA schedule length} to a.s. realize $G_{n}^{\,\mathrm{sh}}$ in each iteration (from Corollary \ref{cor: ShortestTDMAschedule}).

\end{enumerate}
Thus from Definition \ref{def:SlotMixingTime}, for fixed $\beta$, the Slot Mixing Time scales as
\begin{equation}
T_{\mathrm{slots}}(G_n^{\,\mathrm{sh}})\triangleq T_{\mathrm{mix}}(\mathbf{W}\!_n^{\,\mathrm{sh}})\cdot T^{*}(G_n^{\,\mathrm{sh}},\beta)=\Theta(nr^{d-2}\log n)\label{eq:FinalMixingTimeSlots}\end{equation}
slots a.a.s., for $\epsilon=1/n^{\delta}$.

From Proposition \ref{pro:ND-Transmission-Steps-LB} and the Gaussian signaling assumption, when we also allow $\beta$ to depend on $n$ such that $\beta(n)=\Omega(1)$, the \emph{time} to reach this ball scales as \[
\Omega\left(nr^{d-2}\frac{e^{R(n)d/\alpha}}{R(n)}\log n\right)\qquad\mathrm{a.a.s.}\]

where $R(n)\equiv\log(1+\beta(n))$.

\subsubsection{Choice of Communication Range}

For a fixed $\beta$ the mixing time in (\ref{eq:FinalMixingTimeSlots}) scales polynomially in $r$ for $d>1$. Interestingly, for $d=1$, the time slots to mix scales as the inverse of $r$. This suggests that increasing $r$ can improve the rate of convergence. For $d=2$, however, this quantity scales \emph{independently} of $r$, suggesting that these two effects exactly cancel each other, a rather non-intuitive result. For higher dimensions, the scaling law has a positive exponent in $r$---implying that the increasing $r$ can actually slow down mixing.

This dependence on network dimension can be understood as follows. If the network is one-dimensional, although a transmitter is an isotropic
radiator, its effect on the network is seen only along the line $[0,1]$. Although the throughput provided by the optimal TDMA protocol only
scales as $\Theta(n^{-1}r^{-1})$ for a given $\beta$ from Corollary \ref{cor: ShortestTDMAschedule}, the spectral gap scales as $\Theta(r^{-2})$,
offsetting this loss. In $d-$dimensions, however, while the the fastest rate of topology formation scales as $\Theta(n^{-1}r^{-d})$, the spectral gap only
scales as $\Theta(r^{-2})$. As a result, improving spatial re-use can become more important than increasing connectivity.

\subsubsection{Effect of Increasing Transmission Rate\label{sub:ChoiceTransmRate}}

On the one hand, higher transmission rate reduces the packet transmission time; on the other, it also restricts spatial re-use. Clearly the benefit
of smaller packet transmission times can be outweighed by reduced spatial re-use for large rates $R$.

\section{\label{sec:Cvg-Tiered-Networks}Convergence in Networks with Selective Long-Range Connectivity}

\subsection{\label{sub:SpecGapTiered}Scaling of the Spectral Gap}


To derive the scaling law for the mixing time, we need to find the scaling of the spectral gap of $G_{n}^{\mathrm{l}}$. As we will see, deriving the 
scaling law for the conductance of $G_{n}^{\mathrm{l}}$ is sufficient to establish the scaling of the spectral gap.

\begin{prop}\label{prop:ConductanceBound}The conductance of $G_{n}^{\mathrm{l}}$ with edge weights determined by $\mathbf{W}\!_n^{\,\mathrm{l}}$ 
is $\Theta(r^{\gamma})$ a.a.s., for $d=1,2,\ldots$\end{prop}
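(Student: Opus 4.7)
The plan is to prove $h = O(r^\gamma)$ and $h = \Omega(r^\gamma)$ separately. Since $G_n^{\mathrm{l}}$ is asymptotically regular with common degree $d_{\mathrm{avg}} = \Theta(n r^d)$ a.a.s., Fact~\ref{fact:NaturalRegularUniform} makes the stationary distribution uniform, so the defining ratio in \eqref{eq:ConductanceDef} simplifies to $|E(S,\bar S)|/(2 d_{\mathrm{avg}}|S|)$ and the problem reduces to counting edges across a cut.

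For the upper bound, I would use the slab cut $S = \Phi \cap ([0,1/2]\times \mathcal{T}_{d-1})$, which has $|S| = n/2 + o(n)$ a.a.s. Short-range edges across the hyperplane $x_1=1/2$ contribute $O(n^2 r^{d+1})$; long edges in the perpendicular direction $e_1$ contribute $\Theta(n^2 s r^d)$, because a node in $S$ within $s/2$ of the boundary has its entire $+e_1$ partner tile (of $\Theta(nr^d)$ nodes) inside $\bar S$ and $\Theta(ns)$ nodes qualify; long edges in the parallel directions $e_2,\ldots,e_d$ keep both endpoints in the same slab and contribute nothing. Since $s = \Theta(r^\gamma) \gg r$ the perpendicular long-edge term dominates, so $|E(S,\bar S)| = \Theta(n^2 s r^d)$ and $h \le \Theta(s) = \Theta(r^\gamma)$.

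For the lower bound I would argue tile-by-tile using the $\eta r$-side partition of Section~\ref{sub:Tiered-Network-Model}. Write $\sigma_T = |S\cap T|/\tilde n$ for the local $S$-density with $\tilde n = n(\eta r)^d$, and exploit the geometric fact that each tile is a short-range clique (since $\sqrt d\,\eta r \le r$ for small enough $\eta$). Let $B(S) := \sum_T \sigma_T(1-\sigma_T)\tilde n^2$ be the within-tile short-range cut budget. If $B(S) = \Omega(n r^d s |S|)$ (the ``mixed'' regime), the within-tile cut alone already forces $h \ge \Omega(s)$. Otherwise the tiles are nearly saturated; I would majority-round $S$ to $\tilde S = \bigcup_{T:\sigma_T > 1/2} T$, which loses only $o(|S|)$ nodes, and then invoke an isoperimetric inequality on the tile graph $H$ whose vertices are tiles and whose edges are long-range partner pairs --- a Cayley graph on $(\mathbb{Z}/N)^d$ generated by $\pm k^* e_m$ with $k^* = \lfloor s/(2\eta r)\rfloor - 1$, whose conductance can be shown to be $\Theta(s)$ by a direct half-space count. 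Each $H$-cut edge carries $\Omega(\tilde n^2)$ bipartite long-range edges into the $G_n^{\mathrm{l}}$-cut, producing $\Omega(\tilde n s |S|) = \Omega(n r^d s |S|)$ long cut edges and again $h = \Omega(s)$.

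The main obstacle is the tile-graph isoperimetric inequality in the saturated regime. For unlucky $r$, $k^*$ and $N = (\eta r)^{-d}$ can share a large common factor, splitting $H$ into multiple orbits and admitting periodic tile-aligned sets whose long-range partners all land on the same side of the cut. I would handle this either by perturbing $\eta$ slightly so that $\gcd(k^*, N) = 1$, or by bringing short-range edges between \emph{adjacent} tiles into the cut budget as a backup: those contribute a further $\Omega(n^2 r^{d+1}/s)$ edges on any half-sized union of tiles and combine with the long-range contribution to restore $\Omega(s)$ whenever the pure long-range tile graph degenerates. Careful bookkeeping of constants in the two regimes and of the $o(|S|)$ rounding error then closes the proof.
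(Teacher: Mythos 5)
Your reduction of the conductance ratio to an edge count via asymptotic regularity is fine, and your upper bound reproduces the paper's count of short and long edges across the half-torus cut, giving $h=\mathcal{O}(r^{\gamma})$. The gap is in the lower bound, in the saturated regime. The tile graph $H=\mathrm{Cay}\bigl((\mathbb{Z}/N)^{d},\{\pm k^{*}e_{m}\}\bigr)$ does \emph{not} have conductance $\Theta(s)$: a half-space count evaluates the ratio for one set and therefore only bounds the conductance from \emph{above}. Consider the periodic slab set $\tilde S=\{T_{j}:j_{1}\bmod k^{*}\in[0,k^{*}/2)\}$, i.e., slabs of width about $s/4$ repeating with period about $s/2$ along the first axis. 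Translation by $\pm k^{*}e_{1}$ preserves $j_{1}\bmod k^{*}$ except at the wraparound, and translation by $\pm k^{*}e_{m}$ for $m\geq2$ preserves $j_{1}$ outright, so the $H$-cut of $\tilde S$ consists only of wraparound edges; when $N\equiv0$ or $1\pmod{k^{*}}$ this is $\mathcal{O}(N^{d-1})$ edges rather than the $\Theta(k^{*}N^{d-1})$ a half-space produces. Neither repair closes this. Forcing $\gcd(k^{*},N)=1$ does not help, since $N\equiv1\pmod{k^{*}}$ already satisfies it and still yields an $\mathcal{O}(N^{d-1})$ cut (connectivity of $H$ is not the issue; even the $N$-cycle has conductance $\Theta(1/N)=\Theta(\eta r)$, not $\Theta(k^{*}/N)$). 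And the short-range backup is quantitatively insufficient: $\tilde S$ has $\Theta(1/s)$ boundary hyperplanes contributing $\Theta(n^{2}r^{d+1}/s)$ adjacent-tile cut edges, which after normalizing by $\Theta(n^{2}r^{d})$ gives a conductance contribution of $\Theta(r/s)=\Theta(r^{1-\gamma})$. This is $\Omega(r^{\gamma})$ only when $\gamma\geq1/2$; for $\gamma<1/2$ your two budgets combined certify only $\Omega(r^{1-\gamma})=o(r^{\gamma})$ on this family of sets. Worse, when the arithmetic is unlucky (e.g., $2/s\in\mathbb{Z}$, which the model $s=\Theta(r^{\gamma})$ does not exclude) the true conductance ratio of the corresponding node set really is $\Theta(r^{1-\gamma})$, so no bookkeeping of constants within your framework can rescue the $\Omega(r^{\gamma})$ bound for $\gamma<1/2$; such sets must be excluded by some additional structural argument.

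For comparison, the paper does not attempt an isoperimetric inequality over all cuts at all: it asserts, via a symmetrization argument in the style of~\cite[App.~G]{Avin07}, that the minimum in \eqref{eq:ConductanceDef} is attained at $\pi^{*}(S)=1/2$ with $(S,\bar S)$ a hyperplane bisection, and then performs essentially your upper-bound count on that single cut. Your route is more ambitious and correctly identifies where the real difficulty lies --- the periodic, long-range-aligned sets are precisely what such a symmetrization must rule out --- but as written your argument does not rule them out, and the mixed/saturated dichotomy plus the tile-graph isoperimetry claim does not constitute a proof of the $\Omega(r^{\gamma})$ direction.
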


\begin{IEEEproof}We adopt a modified version of the proof in~\cite{Avin07}. 
From (\ref{eq:ConductanceDef}) we know that \[
h=\min_{S\subset\Omega,\pi^{*}(S)\leq1/2}\frac{Q(S,\bar{S})}{\pi^{*}(S)}.\]
By the symmetry in $G_{n}^{\mathrm{l}}$ induced by the construction
in Section \ref{sub:Tiered-Network-Model}, it can be shown using
arguments similar to~\cite[Appendix G]{Avin07} that the minimum occurs for $\pi^{*}(S)=1/2$, and that the minimizing cut
$(S,\bar{S})$ is a hyperplane dividing the torus into two halves.
Without loss of generality, define $S\triangleq\Phi\cap\{[0,1/2)\times[0,1]\}$. 

\begin{figure}[hbt]
\centering
	\psfrag{numrows}{$\begin{matrix} \Theta(1/r) \\ \text{rows} \end{matrix}$}
	\psfrag{numsqrs}{$\Theta(r^\gamma/r)$}
	\psfrag{s}{$\Theta(r)$}
	\psfrag{S}{$S$}
	\psfrag{Sbar}{$\bar{S}$}
	\psfrag{Cut}{$\text{Cut}(S,\bar{S})$}
	\includegraphics[scale=0.7]{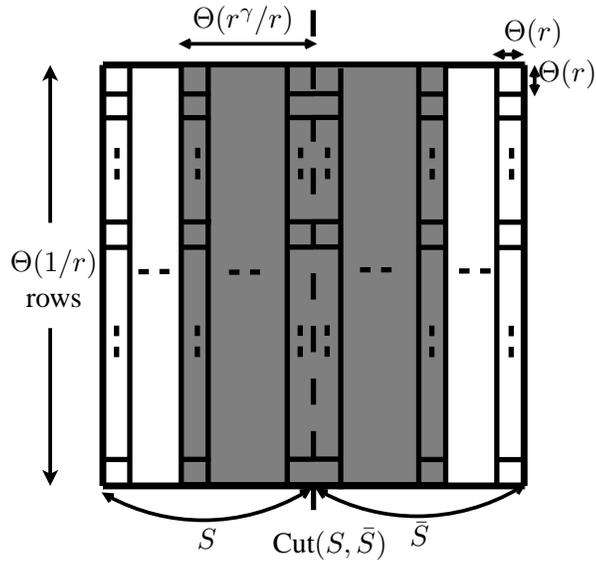}
	\caption{\label{fig:ConductanceBound}The geometry behind the proof of Proposition \ref{prop:ConductanceBound} for $d=2$. The tiling used for the construction of $G_n^{\mathrm{l}}$ is overlaid. By the symmetry induced by the construction, the set $S\subset V_n$ for which $Q(S,\bar{S})/\pi^{*}(S)$ is minimized corresponds to the left-half of the torus as labeled (it can be argued that this set will have the smallest weighted flow for a given frequency of steady-state occupancy). Since the stationary distribution for this set is $1/2$, finding the scaling law for the number of edges that traverse the cut is sufficient to provide a corresponding scaling result for the conductance. For the short-range communication graph $G_n^{\,\mathrm{s}}$ (i.e., the disk graph whose edge length is $\mathcal{O}(r)$) only nodes from a finite number of squares from the tiling in either direction from the cut contribute to these edges. For long edges of length $\Theta(r^\gamma)$, a positive fraction of the nodes from $\Theta(r^\gamma/r)$ squares on either side will contribute to these edges. Since there are $\Theta(1/r)$ such rows of squares, the proof lies in finding the scaling law for the number of edges that traverse the cut.}
\end{figure}

Also for the natural random walk, each edge weight is $\frac{1}{d_{i}}=\Theta\left(\frac{1}{nr^{2}}\right)$
(for $d$ dimensions, $\Theta(n^{-1}r^{-d})$), and the equilibrium
distribution is $\Theta(\frac{1}{n})$. It is thus sufficient to count
the number of edges traversing this cut. The number of short
edges was shown in~\cite{Avin07} to be $\Theta(n^{2}r^{3})$
(for $d$ dimensions $\Theta(n^{2}r^{d+1})$). Observe that every
node in a square of side $\eta r$ has $4n\eta^{2}r^{2}$ long-range
partners. One quarter of these edges traverse the cut $(S,\bar{S})$; hence the potential number of long edges that can traverse the cut from a given square is $n\eta^{2}r^{2}\times n\eta^{2}r^{2}=\Theta(n^{2}r^{4})$.
Since each edge has length at least $s/2-2\eta r=\Theta(r^{\gamma})$ (since $s=\Theta(r^{\gamma})$ and $0<\gamma<1$), which is at most $s$, it is clear that 
$\Theta(r^{\gamma-1})$ squares from the cut will contribute to the edges that traverse the cut (see Fig. \ref{fig:ConductanceBound}). Multiplying this result by the number of rows $\Theta(r^{-1})$ of such squares, the total number of long edges traversing the cut will be $\Theta(n^{2}r^{4}\times r^{\gamma-1}\times r^{-1})=\Theta(n^{2}r^{2+\gamma})$ (for general $d$, $\Theta(n^{2}r^{2d}\times r^{\gamma-1}\times r^{-d+1})=\Theta(n^{2}r^{d+\gamma})$).
Counting both the short and long edges, we have in $d$ dimensions, \begin{eqnarray*}
Q(S,\bar{S}) & = & \Theta\left(\frac{n^{2}r^{d+1}+n^{2}r^{d+\gamma}}{n^{2}r^{d}}\right)\\
 & = & \Theta(r^{\gamma}),\end{eqnarray*}
since $\gamma<1$. 

Notice that if a node were allowed to have only a finite number of long-range partners, the contribution of long-edges towards conductance is smaller, without significant interference-reducing benefits. We elaborate on this point in Section \ref{sub:LengthAndDensityLRLinks}. \end{IEEEproof}

We can infer the following from the above result:

\begin{cor}\label{cor:SpecGapBounds}The spectral gap of $G_{n}^{\mathrm{l}}$
is $\Omega(r^{2\gamma})$ and $\mathcal{O}(r^{\gamma})$.\end{cor}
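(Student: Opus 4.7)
The plan is to obtain the corollary as an immediate consequence of Cheeger's Inequality (Theorem \ref{thm:CheegersInequality}) applied to the conductance estimate just established in Proposition \ref{prop:ConductanceBound}. Since the natural random walk induced by $\mathbf{W}\!_n^{\,\mathrm{l}}$ on $G_n^{\mathrm{l}}$ is reversible (by Fact \ref{fact:NaturalRegularUniform}, using that $G_n^{\mathrm{l}}$ is a.a.s.\ regular by construction in Section \ref{sub:Tiered-Network-Model}), Cheeger's Inequality applies and yields
\[
\tfrac{1}{2}h^{2} \;\leq\; 1-\mu_{2} \;\leq\; 2h.
\]

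First I would invoke Proposition \ref{prop:ConductanceBound} to write $h=\Theta(r^{\gamma})$ a.a.s. Substituting into the right-hand side of Cheeger's bound gives the upper estimate $1-\mu_{2}=\mathcal{O}(r^{\gamma})$, and substituting into the left-hand side gives the lower estimate $1-\mu_{2}\geq\tfrac{1}{2}h^{2}=\Omega(r^{2\gamma})$. These together furnish the two claims of the corollary. Care is only required to ensure that all a.a.s.\ statements compose correctly: the regularity of $G_n^{\mathrm{l}}$, the conductance bound of Proposition \ref{prop:ConductanceBound}, and hence the conclusion all hold on the same probability-one asymptotic event.

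There is really no obstacle; the work was done in Proposition \ref{prop:ConductanceBound}, and Cheeger's Inequality supplies a one-line deduction. The only mildly noteworthy aspect is the asymmetry of the bounds: because $\gamma<1$ we have $r^{2\gamma}\ll r^{\gamma}$ as $n\to\infty$ (equivalently $r\to 0$), so the upper and lower bounds on the spectral gap are not of the same order. Closing this gap would require going beyond Cheeger (for example, via a canonical-paths or comparison argument), which is not attempted here; the corollary records only what Cheeger directly gives and is sufficient for the mixing-time scaling results pursued in the subsequent sections.
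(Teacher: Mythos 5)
Your proposal is correct and follows exactly the paper's own route: apply Cheeger's Inequality (Theorem \ref{thm:CheegersInequality}) to the conductance estimate $h=\Theta(r^{\gamma})$ from Proposition \ref{prop:ConductanceBound}, obtaining $1-\mu_{2}\geq h^{2}/2=\Omega(r^{2\gamma})$ and $1-\mu_{2}\leq 2h=\mathcal{O}(r^{\gamma})$. Your added remarks on reversibility, the composition of a.a.s.\ events, and the asymmetry of the two bounds (later closed in the paper via the inclusion $G_{n}^{\mathrm{l}}\subset G_{n}^{\,\mathrm{sh}}(s/2)$ and Theorem \ref{prop:BestRandomWalk}) are accurate but not a different method.
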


\begin{IEEEproof} From the lower bound in Theorem \ref{thm:CheegersInequality},
we have $1-\mu_{2}=\Omega(r^{2\gamma})$. From the upper bound from the same theorem, we have $1-\mu_{2}=\mathcal{O}(r^{\gamma})$.\end{IEEEproof}

As noted in Section \ref{sub:Tiered-Network-Model}, the distance
between any two (graph-theoretic) neighbors is no more than $s/\sqrt{2}$.
Thus every edge in $G_{n}^{\mathrm{l}}(r,s,d)$ is also present in
the disk graph $G_{n}^{\,\mathrm{sh}}(s/2)$, i.e., $G_{n}^{\mathrm{l}}(r,s,d)\subset G_{n}^{\,\mathrm{sh}}(s/2)$.
Hence a reversible random walk on $G_{n}^{\mathrm{l}}$ with a uniform
equilibrium distribution can mix \emph{no faster than the fastest
mixing such random walk on $G_{n}^{\,\mathrm{sh}}(s/2)$}. This key observation
allows us to use a known result that follows from ~\cite[Thm. 8]{Boyd06}: 

\begin{thm}\label{prop:BestRandomWalk}The spectral gap corresponding to the transition probability matrix of the fastest
mixing reversible random walk on $G_{n}^{\,\mathrm{sh}}(r^{\gamma})$
with a uniform equilibrium distribution is $\Theta(r^{2\gamma})$
a.a.s.\end{thm}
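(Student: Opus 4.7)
The plan is to recognize the statement as a direct consequence of a known scaling law for the spectral gap of the fastest mixing reversible random walk on a disk graph, specialized to communication radius $r^{\gamma}$, and then to verify that the hypotheses of that result remain valid when the radius is taken to be $r^{\gamma}$ instead of $r$.

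First, I would note that by the definition given in Section \ref{sub:Flat-Network-Model}, $G_{n}^{\,\mathrm{sh}}(r^{\gamma})$ is simply the $d$-dimensional disk graph on the underlying binomial point process $\Phi$ with common communication range $r^{\gamma}$. Section \ref{sub:SRMixingTimeIterations} records (citing \cite[Thm.~8]{Boyd06}) that for any disk graph $G_{n}^{\,\mathrm{sh}}(\rho)$ in the supercritical regime $\rho = \omega(r_{c})$, the spectral gap of the fastest mixing reversible random walk with uniform equilibrium distribution is $\Theta(\rho^{2})$ a.a.s., with the hidden constants depending only on the ambient dimension $d$ and on the geometry of the torus. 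That result is itself proved by combining Cheeger's inequality (Theorem \ref{thm:CheegersInequality}) with an isoperimetric estimate for random geometric graphs on the torus, together with a matching lower bound on the second eigenvalue obtained by testing against a smooth coordinate function.

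Second, I would verify that the hypothesis $\rho = r^{\gamma} = \omega(r_{c})$ holds in our setting. By standing assumption, $r = \omega(r_{c})$ where $r_{c} = (\log n / n)^{1/d} \to 0$, so $r < 1$ for all sufficiently large $n$; since $0 < \gamma < 1$, this in turn gives $r^{\gamma} > r = \omega(r_{c})$, and thus $r^{\gamma}$ is itself in the supercritical regime. It follows that $G_{n}^{\,\mathrm{sh}}(r^{\gamma})$ is a.a.s.\ connected and asymptotically regular (by \cite[Lem.~10]{Boyd06}), so Fact \ref{fact:NaturalRegularUniform} guarantees that a uniform stationary distribution is in fact achievable. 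Substituting $\rho = r^{\gamma}$ into $\Theta(\rho^{2})$ then yields the desired $\Theta(r^{2\gamma})$.

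The only subtle point — and the closest thing to an obstacle — is ensuring that the implicit constants in the $\Theta(\cdot)$ of \cite[Thm.~8]{Boyd06} do not depend on the radius $\rho$ in a way that would spoil the substitution $\rho \mapsto r^{\gamma}$. Because that theorem is a uniform scaling statement valid over all supercritical radii with constants depending only on $d$ (and the fixed geometry of the unit torus), this substitution is legitimate. Thus the proof is essentially one of bookkeeping: identify the correct family of graphs, check that $r^{\gamma}$ lies in the supercritical regime, and invoke the cited result.
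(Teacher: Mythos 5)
Your proposal is correct and matches the paper's treatment: the paper also presents this theorem as a direct consequence of the known $\Theta(\rho^{2})$ spectral-gap scaling for supercritical disk graphs from \cite{Boyd06}, specialized to radius $\rho = r^{\gamma}$ (equivalently $s/2$ with $s=\Theta(r^{\gamma})$). Your additional check that $r^{\gamma}=\omega(r_{c})$ keeps the graph in the supercritical regime is exactly the bookkeeping the citation requires.
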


Since mixing time decreases with spectral gap, from Theorem \ref{prop:BestRandomWalk}
we conclude that the spectral gap of $G_{n}^{\mathrm{l}}$ is $\mathcal{O}(r^{2\gamma})$.
But we know from Corollary \ref{cor:SpecGapBounds} that this gap
is also $\Omega(r^{2\gamma})$. Thus we conclude that the spectral
gap of $G_{n}^{\mathrm{l}}$ is $\Theta(r^{2\gamma})$, which is formally
stated as a theorem:

\begin{thm}\label{prop:SpecGapMainResult} The spectral gap of the
natural random walk on $G_{n}^{\mathrm{l}}$ is $\Theta(r^{2\gamma})$.
\end{thm}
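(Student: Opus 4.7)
The plan is to combine the two separate bounds already sketched in the paragraphs immediately preceding the theorem statement, showing they match. From Proposition \ref{pro:ConductanceBound}--style conductance computation (Proposition \ref{prop:ConductanceBound}), we have $h(G_n^{\mathrm{l}})=\Theta(r^\gamma)$ a.a.s. Invoking the lower half of Cheeger's inequality (Theorem \ref{thm:CheegersInequality}), $1-\mu_2 \geq h^2/2$, immediately yields the lower bound $1-\mu_2 = \Omega(r^{2\gamma})$. This is precisely the content of Corollary \ref{cor:SpecGapBounds}, so nothing new is needed here.

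The real work is the matching upper bound $1-\mu_2 = \mathcal{O}(r^{2\gamma})$, since the Cheeger upper bound only delivers the weaker $\mathcal{O}(r^\gamma)$. My strategy is a monotonicity/sandwich argument using the graph inclusion observed just before the statement: every edge of $G_n^{\mathrm{l}}(r,s,d)$ has length at most $s/\sqrt{2}$, so $G_n^{\mathrm{l}}(r,s,d) \subseteq G_n^{\,\mathrm{sh}}(s/2)$ with $s=\Theta(r^\gamma)$. Consequently the natural random walk $\mathbf{W}\!_n^{\,\mathrm{l}}$, being a reversible chain with uniform stationary distribution (thanks to the a.a.s.\ asymptotic regularity of $G_n^{\mathrm{l}}$ established in Section \ref{sub:Tiered-Network-Model}) whose transition matrix is supported on the edge set of $G_n^{\,\mathrm{sh}}(s/2)$, is a feasible candidate for the fastest-mixing reversible walk on $G_n^{\,\mathrm{sh}}(s/2)$ with uniform equilibrium. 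Its spectral gap is therefore at most the spectral gap of the optimum over that feasible set.

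I would then invoke Theorem \ref{prop:BestRandomWalk} with communication range $s/2=\Theta(r^\gamma)$: the fastest-mixing spectral gap on $G_n^{\,\mathrm{sh}}(r^\gamma)$ is $\Theta(r^{2\gamma})$ a.a.s. This gives $1-\mu_2(\mathbf{W}\!_n^{\,\mathrm{l}}) = \mathcal{O}(r^{2\gamma})$ a.a.s. Combining with the $\Omega(r^{2\gamma})$ lower bound from Cheeger concludes $1-\mu_2 = \Theta(r^{2\gamma})$.

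The main obstacle is getting the monotonicity step right: it is tempting but wrong to argue that adding edges can only increase any particular walk's spectral gap, since the natural random walk on a subgraph and on a supergraph have different transition probabilities at the shared edges. The correct observation is that the fastest-mixing optimization (à la Boyd--Diaconis--Xiao) is a supremum over all reversible transition matrices whose off-diagonal support is contained in the graph, so the optimal value is monotone in the edge set. Once this is phrased properly, the natural walk on $G_n^{\mathrm{l}}$ sits inside the feasible set for $G_n^{\,\mathrm{sh}}(s/2)$, and Theorem \ref{prop:BestRandomWalk} closes the gap.
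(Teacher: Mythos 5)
Your proposal is correct and follows essentially the same route as the paper: the lower bound $\Omega(r^{2\gamma})$ comes from the conductance estimate plus the Cheeger lower bound (Corollary \ref{cor:SpecGapBounds}), and the matching upper bound comes from the inclusion $G_{n}^{\mathrm{l}}(r,s,d)\subset G_{n}^{\,\mathrm{sh}}(s/2)$ together with Theorem \ref{prop:BestRandomWalk}. Your explicit framing of the monotonicity step --- that the natural walk on $G_{n}^{\mathrm{l}}$ lies in the feasible set of the fastest-mixing optimization over $G_{n}^{\,\mathrm{sh}}(s/2)$, so its gap is dominated by the optimum --- is a welcome sharpening of the paper's informal ``can mix no faster than'' statement, but it is the same argument.
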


This result suggests that the improvement in spectral gap from an increased
communication radius from $r$ to $r^{\gamma}$ can also be achieved
(in the scaling sense) by allowing each node to communicate with a
selected number of nodes at a distance $\Theta(r^{\gamma})$. 

However, as we shall discuss in the next section, such connectivity
comes at a price of a lowered rate of topology formation. We find
that this loss (as measured by the shortest TDMA schedule length) must be no smaller than the number of nodes in the largest exclusion zone created in the network.
Since the longest link distance in both the disk graph $G_{n}^{\,\mathrm{sh}}(s/2)$ and $G_{n}^{\mathrm{l}}$ are of the same order,
the similarity in the expressions for the spectral gap scaling law suggests that we should expect the same dependence on network dimension as
in (\ref{eq:FinalMixingTimeSlots}).

\subsection{\label{sec:HubConvergence}Convergence with Interference}

We will derive bounds for the shortest feasible TDMA schedule for $G_{n}^{\mathrm{l}}$. In the spirit of the earlier proofs, the lower
bound follows from the feasibility constraint (i.e., the schedule
constructs the desired message passing graph while satisfying the
SINR constraint), while the upper bound is found by bounding the length
of the optimum schedule by that of a specific feasible schedule. These
results are presented in the following.

\begin{prop}\label{prop:ScheduleLB}For a given $\beta$, a feasible schedule for $G_n^{\mathrm{l}}$ has $C_{3}nr^{\gamma d}\beta^{d/\alpha}$ slots a.a.s. for some positive constant $C_{3}$. Furthermore, for a given $\beta$, this length scales as $\Omega(nr^{\gamma d})$ slots a.a.s. \end{prop}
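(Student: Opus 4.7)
The plan is to adapt the lower-bound argument of Proposition \ref{pro:ND-Transmission-Steps-LB} to the long-range edges of $G_n^{\mathrm{l}}$, which have length $\Theta(s)=\Theta(r^\gamma)$ rather than $\Theta(r)$. First I would fix any node $x_i\in\Phi$ together with one of its farthest long-range partners $x_j$; by the construction in Section \ref{sub:Tiered-Network-Model}, such a partner sits in one of the hypercubes $c_m^\pm$, and a.a.s.\ the separation satisfies $\|x_j-x_i\|=\Theta(r^\gamma)$ (the lower-order $\eta r$ contribution being swallowed into the $\Theta$). Since $r^\gamma=\omega(r_c)$, every partner hypercube is a.a.s.\ non-empty, so such an $(i,j)$ pair exists.

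Next I would invoke the SIR constraint (\ref{eq:SINRModel}) at the receiver $x_j$ whenever the link $i\to j$ is scheduled in some slot. If $\mathcal{S}$ is the set of concurrent transmitters in that slot, then exactly as in (\ref{eq:Lower-Bound-SIR-condition}) we get
\[
\|x_k-x_j\|\geq \beta^{1/\alpha}\|x_j-x_i\|=\Omega\!\left(\beta^{1/\alpha}r^\gamma\right),\qquad k\in\mathcal{S}\setminus\{i\}.
\]
Hence a ball $b(x_j,r^{\mathrm{l}}_{\min})$ of radius $r^{\mathrm{l}}_{\min}=c\,\beta^{1/\alpha}r^\gamma$, for a suitable fixed $c>0$ absorbing the $(1-\delta'')$ slack of Proposition \ref{pro:ND-Transmission-Steps-LB}, must be free of all other transmitters during that slot. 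Because every node of $\Phi$ has at least one long-range partner (and typically short-range neighbors as well), every node falling inside this exclusion ball must be scheduled to transmit in some distinct slot of the overall TDMA schedule. Using $r^\gamma=\omega(r_c)$ and a standard binomial-concentration argument for large balls, a.a.s.
\[
|\Phi\cap b(x_j,r^{\mathrm{l}}_{\min})|=n\,|b(0,1)|\,(c\,\beta^{1/\alpha}r^\gamma)^d(1+o(1))\geq C_3\,n\,r^{\gamma d}\beta^{d/\alpha}
\]
for some $C_3>0$, which forces $T^{*}(G_n^{\mathrm{l}},\beta)\geq C_3\,nr^{\gamma d}\beta^{d/\alpha}$ a.a.s. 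Treating $\beta$ as a fixed constant absorbs $\beta^{d/\alpha}$ into the constant and yields the $\Omega(nr^{\gamma d})$ statement.

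The main obstacle I anticipate is not the volume computation but the uniform a.a.s.\ justification: I need to guarantee that \emph{some} node $x_j$ actually realizes the distance $\Theta(r^\gamma)$ to a long-range partner \emph{and} is surrounded by an exclusion ball containing $\Theta(nr^{\gamma d}\beta^{d/\alpha})$ points, simultaneously with high probability. This is handled exactly as in Proposition \ref{pro:ND-Transmission-Steps-LB}: the super-critical regime $r=\omega(r_c)$ ensures that both events hold for a positive-density fraction of the nodes a.a.s., and since the lower bound on the schedule length is an infimum over all feasible schedules, applying the exclusion-zone argument to any single such $(i,j)$ pair suffices.
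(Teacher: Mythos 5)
Your proposal is correct and follows essentially the same route as the paper's own proof: identify one long link of length $\Theta(s)=\Theta(r^\gamma)$ that any feasible schedule must form, use the SIR condition (\ref{eq:SINRModel}) to force an exclusion ball of radius $\Theta(\beta^{1/\alpha}r^\gamma)$ around its receiver, and count the a.a.s.\ $\Theta(nr^{\gamma d}\beta^{d/\alpha})$ nodes inside it, each of which must transmit in a separate slot, exactly as in Proposition \ref{pro:ND-Transmission-Steps-LB}. The only difference is cosmetic: you make the $\beta^{1/\alpha}$ dependence of the exclusion radius explicit, whereas the paper states the radius as $s/4$ and defers the $\beta$-dependence to the invocation of the earlier proposition.
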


\begin{IEEEproof}We prove this result for $d=2$; the proof for $d\neq2$ is similar. From the system model, it is clear that a TDMA protocol that constructs $G_{n}^{\mathrm{l}}$ must form at least one link of distance at least $s/2-2\eta r$. Since $s=\Theta(r^\gamma)$ (i.e., $s$ scales ``much slower'' than $r$), at large enough $n$, the protocol must create an exclusion zone of radius of at least $s/4$ in the network at least once. All nodes within this exclusion zone must transmit at least once. But $s=\omega(r_c^{\gamma})$, which implies we operate the supercritical regime. From a similar argument as in Proposition \ref{pro:ND-Transmission-Steps-LB} we can assert that any feasible TDMA protocol must have at least $C_{3}ns^{d}\beta^{d/\alpha}$ slots where $C_{3}$ is a positive constant. The scaling law for this length follows from the scaling of $s$ with $r$.\end{IEEEproof}

\begin{prop}\label{prop:ScheduleUB}For a given $\beta$, the length of the shortest feasible
schedule for $G_n^{\mathrm{l}}$ is no more than $C_{4}(ns^{d}\beta^{d/\alpha})$ slots a.a.s., for some positive constant $C_{4}$. For a given $\beta$, this upper bound scales as $\mathcal{O}(nr^{d})$ a.a.s.\end{prop}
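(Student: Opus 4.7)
The plan is to leverage the set inclusion $G_n^{\mathrm{l}}(r,s,d) \subseteq G_n^{\mathrm{sh}}(s/\sqrt{2})$ noted just before Corollary~\ref{cor:SpecGapBounds}: by construction every edge of $G_n^{\mathrm{l}}$ has Euclidean length at most $s/\sqrt{2}$, so any TDMA schedule that is feasible for the disk graph $G_n^{\mathrm{sh}}(s/\sqrt{2})$—meaning every node's single broadcast is decoded by every node within distance $s/\sqrt{2}$—automatically realizes every edge of $G_n^{\mathrm{l}}$. Consequently the shortest feasible schedule length for $G_n^{\mathrm{l}}$ is bounded above by the shortest feasible schedule length for $G_n^{\mathrm{sh}}(s/\sqrt{2})$, and the task reduces to re-running the constructive argument of Proposition~\ref{pro:2D-Transmission-Steps-UB} with $r$ replaced by $s/\sqrt{2}$.

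First I would verify that the supercritical hypothesis of Proposition~\ref{pro:2D-Transmission-Steps-UB} is still satisfied: since $s = \Theta(r^\gamma)$ with $0<\gamma<1$ and $r = \omega(r_c)$, we have $s/\sqrt{2} = \omega(r_c)$ as well, so the a.a.s.\ tile-population bound of order $n s^d$ goes through. Next I would instantiate the tiling/sub-lattice construction of Proposition~\ref{pro:2D-Transmission-Steps-UB}: partition the torus into hypercubes of side $x = \theta\,(s/\sqrt{2})$ with $\theta = 10(\xi\beta)^{1/\alpha}$, split them into the $2^d$ parity sub-lattices $\mathbb{L}_{i_1\ldots i_d}$, and run a $2^d$-phase protocol in which a single node of each tile in the active sub-lattice transmits. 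The geometric interference summation in (\ref{eq:Interference-Power-UB})--(\ref{eq:UB-on-Interference-Power-UB}) is unchanged in structure after the substitution $r \mapsto s/\sqrt{2}$ (it converges for the assumed $\alpha>d$), so the same choice of $\theta$ guarantees SIR at least $\beta$ at every intended receiver.

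Finally I would count slots: each hypercube contains $\Theta(n s^d)$ nodes a.a.s., so the $2^d$-phase protocol terminates after at most $C_4\, n s^{d}\beta^{d/\alpha}$ slots a.a.s., where $C_4$ absorbs both the $2^d$ factor and $(\sqrt{2})^{-d}$ into the Proposition~\ref{pro:2D-Transmission-Steps-UB} constant. By optimality the shortest feasible TDMA schedule for $G_n^{\mathrm{l}}$ obeys the same bound, and substituting $s = \Theta(r^\gamma)$ converts this into the claimed asymptotic scaling in terms of $r$.

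The one subtle point to watch is that the SIR bound of Proposition~\ref{pro:2D-Transmission-Steps-UB} must hold for the \emph{longest} intended link in $G_n^{\mathrm{l}}$—of length up to $s/\sqrt{2}$—not merely for short neighbors at distance at most $r$. The protocol's guard zone is what controls this: inequality (\ref{eq:Interference-Power-UB}) uses only $\|x_j - x_i\| \leq$ the communication range of the target disk graph, so substituting $s/\sqrt{2}$ in place of $r$ preserves both the form of the bound and its sufficient SIR guarantee. This makes the argument a clean parameter substitution, and I expect no genuine obstacle beyond making the constants and the $d$-dimensional tiling bookkeeping explicit.
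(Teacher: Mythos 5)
Your proposal is correct and follows essentially the same route as the paper: the paper likewise upper-bounds the optimal schedule by a feasible one that serves the enclosing disk graph of radius $\Theta(s)$, reruns the tiling/phased protocol of Proposition~\ref{pro:2D-Transmission-Steps-UB} with $r$ replaced by $\Theta(s)$, and then substitutes $s=\Theta(r^{\gamma})$. Your added checks (that $s=\omega(r_c)$ keeps the tile-population bound valid, and that the guard zone must protect the longest intended link of length $\Theta(s)$) are details the paper leaves implicit, so no substantive difference.
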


\begin{IEEEproof}Consider any TDMA protocol that allows each node
to communicate with every node within a distance $s$. Clearly this
protocol will also construct $G_{n}^{\mathrm{l}}$ and is hence feasible.
As in Proposition \ref{pro:2D-Transmission-Steps-UB}, we construct
such a four-phase (for $d=2$, in general a $2d$ phase) TDMA protocol
that operates on a tiling of the torus with squares of side $\Theta(s)$.
Using an argument similar to Proposition \ref{pro:2D-Transmission-Steps-UB},
it is clear that the spatial re-use can be adjusted to construct the
graph in $C_{4}ns^{d}\beta^{d/\alpha}$ slots a.a.s. for some constant $C_4>0$. Using $s=\Theta(r^\gamma)$ we get the scaling law. \end{IEEEproof}

\begin{cor}\label{cor:ScheduleLongRangeScaling}As $n\rightarrow\infty$, the shortest feasible schedule for $G_n^{\mathrm{l}}$ has $T^{*}(G_n^{\mathrm{l}},\beta)=\Theta(nr^{\gamma d})$ slots a.a.s., for a fixed $\beta$. If we also let $\beta=\beta(n)=\Omega(1)$, $T^{*}(G_n^{\mathrm{l}},\beta(n))=\Omega(nr^{\gamma d}\{\beta(n)\}^{\frac{d}{\alpha}})$.\end{cor}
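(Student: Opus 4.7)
The plan is to combine the matching a.a.s. bounds already established in Propositions \ref{prop:ScheduleLB} and \ref{prop:ScheduleUB}, in the same spirit as the earlier derivation of Corollary \ref{cor: ShortestTDMAschedule}. Since the two propositions bracket $T^{*}(G_n^{\mathrm{l}},\beta)$ from below and above by expressions of the same functional form in $n$, $r$, and $\beta$, the corollary is essentially immediate; the only thing that requires a moment of care is tracking how the hidden constants depend on $\beta$.

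First I would treat the case of fixed $\beta$. Proposition \ref{prop:ScheduleLB} gives $T^{*}(G_n^{\mathrm{l}},\beta) \geq C_3 n r^{\gamma d} \beta^{d/\alpha}$ a.a.s., while Proposition \ref{prop:ScheduleUB}, together with $s = \Theta(r^{\gamma})$, yields $T^{*}(G_n^{\mathrm{l}},\beta) \leq C_4 n s^{d} \beta^{d/\alpha} = C_4' n r^{\gamma d} \beta^{d/\alpha}$ a.a.s. for some positive $C_4'$. Since $\beta$ is a constant, both bounds collapse to $\Theta(n r^{\gamma d})$ with constants depending on $\beta$, proving the first assertion.

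For the second assertion, I would let $\beta = \beta(n) = \Omega(1)$ and re-examine only the lower bound. Inspection of the proof of Proposition \ref{prop:ScheduleLB} shows that $C_3$ arises from a volume computation on an exclusion ball of radius proportional to $s \beta^{1/\alpha}$, and hence does not depend on $\beta$. Thus the inequality $T^{*}(G_n^{\mathrm{l}},\beta(n)) \geq C_3 n r^{\gamma d} \beta(n)^{d/\alpha}$ persists a.a.s.\ even as $\beta$ is allowed to grow, yielding $T^{*}(G_n^{\mathrm{l}},\beta(n)) = \Omega\!\left(n r^{\gamma d} \beta(n)^{d/\alpha}\right)$. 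The matching upper-bound constant $C_4$ in Proposition \ref{prop:ScheduleUB}, in contrast, absorbs the spatial-reuse parameter $\theta = \theta(\beta)$ coming from the tiling construction, so the result cannot be upgraded from $\Omega$ to $\Theta$ in this regime.

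The main (and really only) subtlety is verifying that the lower-bound constant is $\beta$-independent so that letting $\beta$ scale with $n$ does not break the a.a.s.\ inequality; everything else is a direct transcription of the argument used for Corollary \ref{cor: ShortestTDMAschedule}, now with $r$ replaced by $s = \Theta(r^{\gamma})$.
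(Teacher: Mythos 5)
Your proposal is correct and follows exactly the paper's route: the paper's own proof of this corollary is the one-line observation that it follows from Propositions \ref{prop:ScheduleLB} and \ref{prop:ScheduleUB}, with the $\beta$-dependence of the constants handled just as in Corollary \ref{cor: ShortestTDMAschedule}. Your additional check that the lower-bound constant is $\beta$-independent (while the upper-bound constant is not) is precisely the point the paper makes in that earlier corollary, so nothing is missing.
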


\begin{IEEEproof}Follows from Propositions \ref{prop:ScheduleLB}
and \ref{prop:ScheduleUB}.\end{IEEEproof}

\subsection{Rate of Convergence with Sparse Long-Range Connectivity\label{sec:Convergence-With-Sparse-Long-Range-Links}}

We repeat the analysis in Section~\ref{sec:Convergence-In-Dense-Networks} to study the benefit of sparse long-range connectivity for a large number of nodes. From this analysis, we derive a result analogous to (\ref{eq:FinalMixingTimeSlots}) for the long-range model. We use this result to discuss the impact of increased communication range. 

\subsubsection{Slot Mixing Time}

From Theorem \ref{prop:SpecGapMainResult}, the spectral gap of $G_n^{\mathrm{l}}$ scales as $\Theta(r^{2\gamma})$. Consequently,~from the mixing time bounds in Theorem \ref{thm:MixingTimeBounds}, we conclude that the mixing time with $\mathbf{W}\!_n^{\,\mathrm{l}}$ scales as 
\begin{eqnarray}
 T_{\mathrm{mix}}(\mathbf{W}\!_n^{\,\mathrm{l}}) &=& \Theta(r^{2\gamma}\log n)\qquad\mathrm{a.a.s.}\label{eq:MixingTimeLongRange}
\end{eqnarray}
iterations for $\epsilon=1/n^\delta$. On the other hand, from Corollary \ref{cor:ScheduleLongRangeScaling} the shortest TDMA schedule that realizes $G_n^{\mathrm{l}}$ scales as $\Theta(nr^{d\gamma})$ slots.

Multiplying $T_{\mathrm{mix}}(\mathbf{W}\!_n^{\,\mathrm{l}})$ and $T^{*}(G_n^{\mathrm{l}},\beta)$ we obtain a scaling law analogous to (\ref{eq:FinalMixingTimeSlots}) for a network with sparse long links. We state this result as a proposition:

\begin{prop}\label{pro:MixingTimeSlotsLongRange}As $n\rightarrow\infty$, for $\epsilon=1/n^{\delta}$ and with the shortest feasible
TDMA schedule, the slot mixing time of natural random walks on a sequence of random graphs $(G_n^{\mathrm{l}})$ on a $d-$dimensional torus scales as\begin{eqnarray}
T_{\mathrm{slots}}(G_n^{\mathrm{l}})\triangleq T_{\mathrm{mix}}(\mathbf{W}\!_n^{\,\mathrm{l}})\cdot T^{*}(G_n^{\mathrm{l}},\beta) & = & \Theta\left(nr^{(d-2)\gamma}\log n \right)\quad\mathrm{a.a.s.}\label{eq:MixingSlotsLongRange}
\end{eqnarray}
where $r$ is the short range communication radius, long links are $\Theta(r^\gamma)$ for some $0<\gamma<1$, and nodes use point-to-point capacity-achieving AWGN channel codes with SNR threshold $\beta$. 
 \end{prop}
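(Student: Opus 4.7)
The plan is to stitch together three earlier results: the spectral gap estimate for $G_n^{\mathrm{l}}$ in Theorem \ref{prop:SpecGapMainResult}, the Sinclair-style bounds on mixing time in Theorem \ref{thm:MixingTimeBounds}, and the interference-limited schedule length estimate in Corollary \ref{cor:ScheduleLongRangeScaling}. Since the Slot Mixing Time is defined in Definition \ref{def:SlotMixingTime} as the product $T_{\mathrm{mix}}(\epsilon;\mathbf{W}\!_n^{\,\mathrm{l}})\cdot T^{*}(G_n^{\mathrm{l}},\beta)$, it suffices to establish the asymptotic scaling of each factor separately and multiply.

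First I would invoke Theorem \ref{prop:SpecGapMainResult} to write $1-\mu_2(\mathbf{W}\!_n^{\,\mathrm{l}})=\Theta(r^{2\gamma})$ a.a.s.\ Next, since $\mathbf{W}\!_n^{\,\mathrm{l}}$ is doubly stochastic and positive definite on the a.a.s.\ connected, asymptotically regular graph $G_n^{\mathrm{l}}$, Theorem \ref{thm:MixingTimeBounds} applies. Under the assumed polynomial error scaling $\epsilon = 1/n^\delta$, the remark after Theorem \ref{thm:MixingTimeBounds} notes that the upper and lower bounds both collapse to $\Theta((1-\mu_2)^{-1}\log n)$; substituting yields
\[
T_{\mathrm{mix}}(\epsilon;\mathbf{W}\!_n^{\,\mathrm{l}}) \;=\; \Theta\bigl(r^{-2\gamma}\log n\bigr)\qquad\mathrm{a.a.s.}
\]
This is the correct reading of (\ref{eq:MixingTimeLongRange}) in the iteration-count sense.

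For the second factor, Corollary \ref{cor:ScheduleLongRangeScaling} directly gives, for a fixed SINR threshold $\beta$, $T^{*}(G_n^{\mathrm{l}},\beta) = \Theta(nr^{\gamma d})$ a.a.s. Combining using Definition \ref{def:SlotMixingTime},
\[
T_{\mathrm{slots}}(G_n^{\mathrm{l}}) \;=\; \Theta\bigl(r^{-2\gamma}\log n\bigr)\cdot\Theta\bigl(nr^{\gamma d}\bigr)
\;=\; \Theta\bigl(nr^{(d-2)\gamma}\log n\bigr)\qquad\mathrm{a.a.s.}
\]
which is the claimed scaling law.

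There is no substantive obstacle here; the only care needed is bookkeeping, namely (i) verifying that both appealed-to a.a.s.\ events (regularity/connectivity from Section \ref{sub:Tiered-Network-Model}, the spectral-gap event from Theorem \ref{prop:SpecGapMainResult}, and the scheduling event from Corollary \ref{cor:ScheduleLongRangeScaling}) all hold on a common probability-one event in the limit, so the product of $\Theta(\cdot)$ statements is legitimate, and (ii) noting that the polynomial-$\epsilon$ regime is what tightens the Sinclair bounds into a matching $\Theta$; under exponential scaling $\epsilon=\exp(-\delta' n)$ one would instead obtain $\Theta(n^2 r^{(d-2)\gamma})$, while the constant-$\epsilon$ case yields a residual $\log n$ gap between upper and lower bounds, per the same remark following Theorem \ref{thm:MixingTimeBounds}.
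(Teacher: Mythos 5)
Your proposal is correct and follows exactly the route the paper takes (the paper's ``proof'' is the short paragraph preceding the proposition): invoke Theorem \ref{prop:SpecGapMainResult} for the spectral gap, tighten Theorem \ref{thm:MixingTimeBounds} under polynomial $\epsilon$-scaling to get the iteration count, take the schedule length from Corollary \ref{cor:ScheduleLongRangeScaling}, and multiply per Definition \ref{def:SlotMixingTime}. You also correctly read the mixing time as $\Theta(r^{-2\gamma}\log n)$, which is what the displayed equation (\ref{eq:MixingTimeLongRange}) must mean (its exponent sign is a typo) for the final product to come out as $\Theta(nr^{(d-2)\gamma}\log n)$.
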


From Proposition \ref{pro:ND-Transmission-Steps-LB} and the Gaussian signaling assumption in Section \ref{sub:Comm-Assumptions},
when we also let $\beta\equiv\beta(n)=\Omega(1)$, the \emph{time} to reach this ball scales as \[
\Omega\left(nr^{\gamma(d-2)}\frac{e^{R(n)d/\alpha}}{R(n)}\log n\right),\qquad\mathrm{a.a.s.}\]

where $R(n)\equiv\log(1+\beta(n))$.

\subsubsection{Impact of Increasing Communication Range on the Convergence Speed}

For a fixed $\beta$, from (\ref{eq:MixingSlotsLongRange}) we notice that as with short-range links, the slot mixing time scales polynomially in $r$ for $d>1$. The parameter $s$ that controls the distance of long-range communication enters the scaling law through $r^\gamma$, since $s=\Theta(r^\gamma)$. By comparing (\ref{eq:MixingSlotsLongRange}) and (\ref{eq:FinalMixingTimeSlots}) it is clear that its role is identical to that of $r$ in (\ref{eq:FinalMixingTimeSlots}). Thus we expect the impact of increased communication range to have the same dependence of the network dimension as in (\ref{eq:FinalMixingTimeSlots}). From Proposition \ref{prop:ScheduleLB} and an analysis similar to Section \ref{sub:ChoiceTransmRate}, it follows that while one-dimensional networks can converge faster from an increased communication range despite greater interference, the convergence speed of two-dimensional networks scales independently of the communication range. In higher-dimensions the increased interference from a larger communication range can actually lower the rate of convergence. 

From the model $s=\Theta(r^\gamma)$, which implies that a larger $s$ can result from either a larger $r$ (communicating with more nearby nodes) or a smaller $\gamma$ (communicating with nodes farther away). In either case we find that (\ref{eq:MixingSlotsLongRange}) scales faster than (\ref{eq:FinalMixingTimeSlots}): when interference is accounted for, selective long-range communications do not improve the rate of convergence.

\subsubsection{The Importance of Long-Range Clusters\label{sub:LengthAndDensityLRLinks}}
Here we discuss the importance of forming long links from a node to a \emph{cluster} of nodes. Briefly, we argue that adding only a few long edges to a given node does not take full advantage of the broadcast nature of the wireless medium: while these fewer long edges to a node reduce the spectral gap (and can increase mixing time as a result of Theorem \ref{thm:MixingTimeBounds}), forming long links from a node to a \emph{cluster} of nodes causes approximately the same interference as forming a point-to-point link of the same distance. Hence in lowering this cluster size, we do not gain from reduced interference, but can only worsen the spectral gap. So, when interference is factored in, allowing a node to talk to a far-off \emph{cluster} rather than a few far-off nodes allows faster mixing for the same level of interference. 
%

The effect of forming clusters is captured in the long-range model in Section \ref{sub:Tiered-Network-Model}, which adds all the nodes from a partner hypercube as long-range partners. This maximizes the number of long edges contributed by each hypercube and results in the scaling law in Proposition \ref{prop:ConductanceBound}. This is key to deriving Theorem \ref{prop:SpecGapMainResult}.

Suppose we modify the way long edges are added in this model by constructing a new graph $G_n^\prime(s)$ by assigning each node only $\rho_n = \mathcal{O}(nr^d)$ long-range partners in each partner hypercube. Evidently $G_n^\prime(s)$ is regular a.a.s., with node degree $nr^d|b(0,1)| + 2d\rho_n + o(1)$; so iterations as in (\ref{eq:ScalarStateUpdate}) converge to the average consensus point a.a.s. Denote the corresponding update matrix by $\mathbf{W}\!_n^{\,\prime}$. We will now examine the scaling of the spectral gap of $\mathbf{W}\!_n^{\,\prime}$.

Following the steps in the proof of Proposition \ref{prop:ConductanceBound}, the (edge-weighted) conductance of $G^\prime$ is $\Theta(r + r^{\gamma}(\rho_n/nr^d))$. Since $\rho_n = \mathcal{O}(nr^d)$, the conductance can scale no faster than $r^{\gamma}$. 

Therefore, unlike in the case with $G_n^{\mathrm{l}}(s)$, exploiting the inclusion $G_n^{\prime}(s)\subset G_n^{\mathrm{sh}}(s/2)$ is not enough to conclude the spectral gap of $\mathbf{W}\!_n^{\,\prime}$ to be $\Theta(r^{2\gamma})$. But the inclusion does confirm the spectral gap to be $\mathcal{O}(r^{2\gamma})$. Hence, as one would expect, iterations of the form (\ref{eq:ScalarStateUpdate}) can converge no faster with $\mathbf{W}\!_n^{\,\prime}$ than with $\mathbf{W}\!_n^{\,\mathrm{l}}$.

However, in the scaling limit, the interference resulting from the construction of $G_n^{\mathrm{l}}$ or $G_n^{\prime}$ are the same: it is obvious from Propositions \ref{prop:ScheduleLB} and \ref{prop:ScheduleUB} that the shortest feasible TDMA schedule for $G_n^{\prime}$ is also $\Theta(nr^{\gamma d})$ slots. We thus conclude that maximizing the cluster size to include all the nodes inside a partner hypercube speeds up convergence for the same level of interference. However, when this cluster is enlarged to include all nodes within a radius $s$, we have a disk graph with radius $s$. From the results in the previous sections, it is clear that the interference penalty to realize this larger disk graph scales similarly but is certainly larger than that of $G_n^{\mathrm{l}}(s)$, which has only selective long-range links.

\section{Conclusions\label{sec:Conclusions}}

We analyzed the convergence rate of average consensus algorithms in the scaling limit of dense wireless networks by combining results
from Markov chain theory, random geometric graphs, and wireless networks. When messages in a topology are exchanged over wireless links, the
impact of a greater communication range depends crucially on the network dimension. Increased communication range can speed up convergence in one-dimensional networks despite greater interference. In two-dimensional networks, the convergence speed scales independently of the communication range. In three- (and higher-) dimensional networks, forming long links can actually slow down convergence. These results hold whether each node only communicates over short links, or, additionally, with a cluster of far-away nodes. 

These results greatly differ from many optimistic results about the benefit of long-range connectivity obtained by analyzing the consensus problem in an abstract graph-theoretic setting. Our results underline the need to accurately account for the cost of interference in designing fast-converging topologies for the average consensus algorithm, or for distributed signal processing problems, in general.

\bibliographystyle{ieeetr} \bibliographystyle{ieeetr} \bibliographystyle{ieeetr}
\bibliographystyle{ieeetr}
\bibliography{./ConsensusRefs}

\end{document}